\newcommand{\resp}{respectively}
\newcommand{\dfa}{DFA}
\newcommand{\N}{\ensuremath{\mathbb{N}}}
\newcommand\footnoteref[1]{\protected@xdef\@thefnmark{\ref{#1}}\@footnotemark}
\newtheorem{clm}{Claim}
\newcommand{\mpc}{\texttt{mpc}}
\newcommand{\mpl}{\texttt{mpl}}
\newcommand{\mps}{\texttt{mps}}
\newcommand{\dsc}{\texttt{sc}}
\newtheorem{thm}{Theorem}
\newtheorem{lem}[thm]{Lemma}
\title{On Minimal Pumping Constants for Regular Languages}
\author{Markus Holzer \qquad\qquad Christian Rauch
\institute{Institut f\"ur Informatik, Universit\"at Giessen,
  Arndstr.~2, 35392 Giessen, Germany}
\email{\quad holzer@informatik.uni-giessen.de\quad\qquad
  christian.rauch@informatik.uni-giessen.de} }
\newcommand{\titlerunning}{On Minimal Pumping Constants for Regular Languages}
\newcommand{\authorrunning}{M. Holzer \& C. Rauch}
\begin{document}

\maketitle

\begin{abstract}
  The study of the operational complexity of minimal pumping constants
  started in [\textsc{J.~Dassow} and \textsc{I.~Jecker}. Operational
  complexity and pumping lemmas. \textit{Acta Inform.}, 59:337--355,
  2022], where an almost complete picture of the operational
  complexity of minimal pumping constants for two different variants
  of pumping lemmata from the literature was given. We continue this
  research by considering a pumping lemma for regular languages that
  allows pumping of sub-words at any position of the considered word,
  if the sub-word is long enough [\textsc{S.~J.~Savitch}. Abstract
  Machines and Grammars. 1982].  First we improve on the simultaneous
  regulation of minimal pumping constants induced by different pumping
  lemmata including Savitch's pumping lemma. In this way we are able
  to simultaneously regulate four different minimal pumping
  constants. This is a novel result in the field of descriptional
  complexity. Moreover, for Savitch's pumping lemma we are able to
  completely classify the range of the minimal pumping constant for
  the operations Kleene star, reversal, complement, prefix- and
  suffix-closure, union, set-subtraction, concatenation, intersection,
  and symmetric difference. In this way, we also solve some of the
  open problems from the paper that initiated the study of the
  operational complexity of minimal pumping constants mentioned above.
%  \keywords{finite automata \and pumping \and regular languages \and finite-state devices \and descriptional complexity.}
\end{abstract}

\section{Introduction}

Pumping lemmata are fundamental to the study of formal languages. An
annotated bibliography on variants of pumping lemmata for regular and
context-free languages is given in~\cite{Ni82}. One variant of the
pumping lemma states that for any regular language~$L$, there exists a
constant~$p$ (depending on~$L$) such that any word~$w$ in the language
of length at least~$p$ can be split into three parts $w=xyz$,
where~$y$ is non-empty, and $xy^tz$ is also in the language, for every
$t\geq 0$---see Lemma~\ref{lem:pumping}. By the contrapositive one can
prove that certain languages are not regular. Since the aforementioned
pumping lemma is only a necessary condition, it may happen that such a
proof fails for a particular language such as, e.g.,
$\{\,a^mb^nc^n\mid\mbox{$m\geq 1$ and
  $n\geq 0$}\,\}\cup\{\,b^mc^n\mid\mbox{$m,n\geq 0$}\,\}$.  The
application of pumping lemmata is not limited to prove non-regularity.
For instance, they also imply an algorithm that decides whether a
regular language is finite or not.  A regular language~$L$ is
\emph{infinite} if and only if there is a word of length at least~$p$,
where~$p$ is the aforementioned constant of the pumping
lemma.\footnote{%
  For the other pumping lemma constants~$p$ considered in this paper,
  the statement on infiniteness can be strengthened to: a regular
  language~$L$ is \emph{infinite} if and only if there is a word of
  length~$\ell$ with $p< \ell\leq 2p$. This also holds true if~$p$
  refers to the deterministic state complexity of a language.} Here a
small~$p$ is beneficial.  Thus, for instance, the question arises on
how to determine a small or smallest value for the constant~$p$ such
that the pumping lemma is still satisfied.

For a regular language~$L$ the value of~$p$ in the above-mentioned
pumping lemma can always be chosen to be the number of states of a
finite automaton, regardless whether it is deterministic or
nondeterministic, accepting~$L$.  Consider the unary
language~$a^na^*$, where all values~$p$ with $0\leq p\le n$ do
\emph{not} satisfy the property of the pumping lemma, but $p=n+1$
does. A closer look on some example languages reveals that sometimes a
much smaller value suffices. For instance, consider the language
$$L=a^* + a^*bb^* + a^*bb^*aa^* + a^*bb^*aa^*bb^*,$$ 
which is accepted by a (minimal) deterministic finite automaton with
five states, the sink state included, but already for $p=1$ the
statement of the pumping lemma is satisfied. It is easy to see that
regardless whether the considered word starts with~$a$ or~$b$, this
letter can be readily pumped. Thus, the minimal pumping constant
satisfying the statement of pumping lemma for the language~$L$ is~$1$,
because the case $p=0$ is equivalent to $L=\emptyset$. This leads to
the notation of a minimal pumping constant for a language~$L$ w.r.t.\
a particular pumping lemma, which is the smallest number~$p$ such that
the pumping lemma under consideration for the language~$L$ is
satisfied.

Recently minimal pumping lemmata constants were investigated from a
descriptional complexity perspective in~\cite{DaJe22}.  Besides basic
facts on these constants for two specific pumping
lemmata~\cite{Br84,HoUl79,Ko97,RaSc59} their relation to each other
and their behaviour under regularity preserving operations was studied
in detail.  In fact, it was proven that for three
natural numbers~$p_1$,\ $p_2$, and~$p_3$ with
$1\leq p_1\leq p_2\leq p_3$, there is a regular language~$L$ over a
growing size alphabet such that $\mpc(L) = p_1$,\ $\mpl(L) = p_2$, and
$\dsc(L) = p_3$, where $\mpc$ ($\mpl$, \resp) refers to the minimal
pumping constant induced by the pumping lemma from~\cite{Ko97}
(from~\cite{Br84,HoUl79,RaSc59}, \resp) and $\dsc$ is the abbreviation
of the deterministic state complexity. This simultaneous regulation of
three measures is novel in descriptional complexity theory. For the
exact statements of the pumping lemmata mentioned above we refer to
Lemma~\ref{lem:pumping} and its following paragraph.  The
\emph{operational complexity of pumping} or \emph{pumping lemmata} for
an $n$-ary regularity preserving operation~$\circ$ undertaken
in~\cite{DaJe22} is in line with other studies on the operational
complexity of other measures for regular languages such as the state
complexity or the accepting state complexity to mention a few. The
operational complexity of pumping is the study of the set
$g_{\circ}(k_1 ,k_2 ,\ldots,k_n )$ of all numbers~$k$ such that there
are regular languages $L_1, L_2,\ldots ,L_n$ with minimal pumping
complexity $k_1,k_2,\ldots,k_n$, \resp, and the language
$L_1\circ L_2\circ\cdots\circ L_n$ has minimal pumping complexity~$k$.
In~\cite{DaJe22} a complete picture for the operational complexity
w.r.t.\ the pumping lemma from~\cite{Ko97} (measure~$\mpc$) for the
operations Kleene closure, complement, reversal, prefix and
suffix-closure, circular shift, union, intersection, set-subtraction,
symmetric difference, and concatenation was given---see
Table~\ref{tab:results} on page~\pageref{tab:results}.  However, for
the pumping lemma from~\cite{Br84,HoUl79,RaSc59} (measure~$\mpl$) some
results from~\cite{DaJe22} are only partial (set-subtraction and
symmetric difference) and others even remained open (circular shift
and intersection); for comparison see the table mentioned above. The
behaviour of these measures differ with respect to finiteness/infinity
of ranges, due to the fact that for the pumping lemma
from~\cite{Br84,HoUl79,RaSc59} the pumping has to be done within a
prefix of bounded length.

This is the starting point of our investigation.  As a first step we
improve on the above mentioned result on the simultaneous regulation
of minimal pumping constants showing that already a binary language
suffices. If we additionally also consider a fourth measure ($\mps$)
induced by the pumping lemma from~\cite{Sa82}, we obtain a similar
result for a quinary language. Thus, we are able to regulate four
descriptional complexity measures simultaneously on a single regular
language. Savitch's pumping lemma allows pumping of sub-words at any
position of the considered word, if the sub-word is long enough---see
Lemma~\ref{lem:pumping-xy-length}.  Moreover, the outcome of our study
on the operational complexity of pumping presents a comprehensive view
for the previously mentioned operations. In passing, we can also solve
all the partial and open problems from~\cite{DaJe22}, completing the
overall picture for the three pumping lemmata in question---see the
gray shaded entries in Table~\ref{tab:results} on
page~\pageref{tab:results}. This provides a full understanding of the
operational complexity of these pumping lemmata. it is worth
mentioning that the obtained result are very specific to the
considered pumping lemmata---compare with~\cite{GHR23,HoRa23a} where
descriptional and computational complexity aspects of Jaffe's pumping
lemma~\cite{Ja78} are considered. For instance, the simultaneous
regulation of pumping constants involving those satisfying Jaffe's
pumping lemma seems to be much more complicated, since only the
deterministic state complexity can serve as an upper bound, while the
nondeterministic state complexity becomes incomparable. Due to space
constraints almost all proofs are omitted; they can be
found in the full version of this paper.

\section{Preliminaries}

We recall some definitions on finite automata as contained
in~\cite{Ha78}. Let~$\Sigma$ be an alphabet. Then, as usual~$\Sigma^*$
refers to the set of all words over the alphabet~$\Sigma$, including
the empty word~$\lambda$, and $\Sigma^{\leq k}$ denotes the set of all
words of length at most~$k$. For a word $w=a_1a_2\ldots a_n\in \Sigma^*$ and a
natural number $k\geq 1$ we refer to the word $a_1a_2\ldots a_k$, if
$k\leq n$, and $a_1a_2\ldots a_n$, otherwise, as the
\emph{$k$-prefix of~$w$}. If $k=0$, then~$\lambda$ is the unique
$0$-prefix of any word. Analogously one can define the
\emph{$k$-suffix of a word~$w$}.

A \emph{deterministic finite automaton} (\dfa) is a quintuple
$A=(Q,\Sigma,{}\cdot{},q_0,F)$, where~$Q$ is the finite set of
\textit{states}, $\Sigma$ is the finite set of \textit{input symbols},
$q_0\in Q$ is the \textit{initial state}, $F\subseteq Q$ is the set of
\textit{accepting states}, and the \textit{transition
  function}~$\cdot$ maps $Q\times\Sigma$ to~$Q$. The \emph{language
  accepted} by the \dfa~$A$ is defined as
$L(A) =\{\,w\in \Sigma^*\mid\mbox{$q_0\cdot w\in F$}\,\}$, where the
transition function is recursively extended to a mapping
$Q\times\Sigma^*\rightarrow Q$ in the usual way.  Finally, a finite
automaton is \emph{unary} if the input alphabet~$\Sigma$ is a
singleton set, that is, $\Sigma=\{a\}$, for some input symbol~$a$.
The \emph{deterministic state complexity of a finite automaton}~$A$
with state set~$Q$ is referred to as $\dsc(A):=|Q|$ and the
\emph{deterministic state complexity of a regular language}~$L$ is
defined as
$$\dsc(L)=\min\{\,\dsc(A)\mid\mbox{$A$ is a \dfa\ accepting~$L$, i.e., $L=L(A)$}\,\}.$$

A finite automaton is \emph{minimal} if its number of states is
minimal with respect to the accepted language. It is well known that
each minimal \dfa\ is isomorphic to the \dfa\ induced by the
Myhill-Nerode equivalence relation.  The \emph{Myhill-Nerode}
equivalence relation~$\sim_L$ for a
language~$L\subseteq\Sigma^*$ is defined as follows: for
$u,v\in \Sigma^*$ let $u\sim_Lv$ if and only if $uw\in L\iff vw\in L$,
for all~$w\in \Sigma^*$. The equivalence class of~$u$ is referred to
as~$[u]_L$ or simply~$[u]$ if the language is clear from the context
and it is the set of all words that are equivalent to~$u$ w.r.t.\ the
relation~$\sim_L$, i.e., $[u]_L=\{\,v\mid u\sim_L v\,\}$.

Regular languages satisfy a variety of different pumping lemmata---for
a comprehensive list of pumping or iteration lemmata we refer
to~\cite{Ni82}. A well known pumping lemma variant can be found
in~\cite[page~70, Theorem~11.1]{Ko97}:

\begin{lem}\label{lem:pumping}
  Let~$L$ be a regular language over~$\Sigma$. Then, there is a
  constant~$p$ (depending on~$L$) such that the following holds: If
  $w\in L$ and $|w|\geq p$, then there are words $x\in \Sigma^*$,\
  $y\in \Sigma^+$, and $z\in \Sigma^*$ such that $w = xyz$ and
  $xy^tz \in L$ for~$t\geq 0$---it is then said that~$y$ can be
  \emph{pumped} in~$w$. Let~$\mpc(L)$ denote the smallest number~$p$
  satisfying the aforementioned statement.
\end{lem}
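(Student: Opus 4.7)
The plan is to exploit the hypothesis that $L$ is regular by fixing a \dfa{} $A=(Q,\Sigma,\cdot,q_0,F)$ with $L(A)=L$ and taking $p:=|Q|$. The key combinatorial ingredient is the pigeonhole principle applied to the sequence of states traversed on a sufficiently long input. This is the standard textbook approach; there is no real obstacle, and the proof serves essentially to record the argument and make the constant explicit, showing in particular that $\mpc(L)\le\dsc(L)$.

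Concretely, given $w=a_1a_2\cdots a_n\in L$ with $n\geq p$, I would introduce the state trajectory $q_i:=q_0\cdot a_1a_2\cdots a_i$ for $0\le i\le n$. The initial segment $q_0,q_1,\ldots,q_p$ consists of $p+1$ states chosen from a set of size $p$, so by pigeonhole there exist indices $0\le i<j\le p$ with $q_i=q_j$. I then set $x:=a_1\cdots a_i$, $y:=a_{i+1}\cdots a_j$, and $z:=a_{j+1}\cdots a_n$, which gives $w=xyz$ with $y\in\Sigma^+$ because $i<j$.

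To verify the pumping property, I would observe that reading $y$ from $q_i$ returns to $q_i$, so by a straightforward induction on $t$ reading $y^t$ from $q_i$ also ends at $q_i$. Consequently, the computation of $A$ on $xy^tz$ traces $q_0\cdots q_i$, loops on $q_i$ for $t$ iterations, and then continues from $q_j=q_i$ exactly as the original computation did from position $j$ onwards, ultimately reaching $q_n\in F$. Hence $xy^tz\in L$ for every $t\ge 0$, which is the desired conclusion.

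The only point that needs attention, rather than being a serious difficulty, is bookkeeping: one has to pick the indices so that $i<j$ (strict inequality) to guarantee $y$ is non-empty, and one has to restrict the pigeonhole argument to the first $p$ symbols so that the relevant states $q_0,\ldots,q_p$ all lie on the prefix of $w$ that is actually read. Once these are in place the statement, together with the existence of the well-defined minimum $\mpc(L)$ implicit in its formulation, follows immediately.
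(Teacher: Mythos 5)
Your proof is correct: it is the standard pigeonhole argument on the state trajectory of a \dfa{} with $p=|Q|$ states, which is exactly the classical proof the paper relies on by citing~\cite[page~70, Theorem~11.1]{Ko97} rather than reproving the lemma. Note that by restricting the pigeonhole to the first $p$ symbols you in fact establish the stronger variant with $|xy|\leq p$ (the one underlying $\mpl$), which of course implies the stated lemma and also yields the bound $\mpc(L)\leq\mpl(L)\leq\dsc(L)$ used later in the paper.
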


The above lemma can be slightly modified with the condition
$|xy|\leq p$, which can be found in~\cite[page~119, Lemma~8]{RaSc59},
\cite[page~252, Folgerung~5.4.10]{Br84}, and~\cite[page~56,
Lemma~3.1]{HoUl79}. Analogously, to $\mpc$ one defines~$\mpl(L)$, as
the smallest number~$p$ satisfying the statement of the modified
pumping lemma.

Recently, pumping lemmata were considered in~\cite{DaJe22}, where 
besides some simple facts
such as
\begin{enumerate}
\item $\mpc(L)=0$ if and only if $\mpl(L)=0$ if and only if
  $L=\emptyset$,
\item for every non-empty finite language~$L$ we have
  $\mpc(L)=\mpl(L)= 1+\max\{\,|w|\mid w\in L\,\}$,
\item $\mpc(L)=1$ implies $\lambda\in L$, and
\item if $\mpl(L)=1$, then~$L$ is suffix closed,\footnote{A
    language~$L\subseteq\Sigma^*$ is \emph{suffix closed} if
    $L=\{\,x\mid\mbox{$yx\in L$, for some $y\in\Sigma^*$}\,\}$, i.e.,
    the word~$x$ is a member of~$L$ whenever $yx$ is in~$L$, for
    some~$y\in\Sigma^*$.}
\end{enumerate}
also the inequalities
$$\mpc(L)\leq\mpl(L)\leq\dsc(L)$$
and results on the operational complexity w.r.t.\ these minimal
pumping constants were shown. The upper bound on the
minimal pumping constants by the deterministic state complexity is
obvious. Moreover, in~\cite{DaJe22} it was also proven that for three
natural numbers~$p_1$,\ $p_2$, and~$p_3$ with
$1\leq p_1\leq p_2\leq p_3$, there is a regular language~$L$ such that
$\mpc(L) = p_1$,\ $\mpl(L) = p_2$, and $\dsc(L) = p_3$.
The witness language to prove this result is in almost all
cases, except for $p_2=p_3$,
$$L= b^{p_1-1}(a^{p_2-p_1+1})^* + c_1^* + c_2^* + \cdots + c_{p_3-p_2-1}^*,$$
while for the remaining case a unary language is given. Hence,~$L$ is
a language over an alphabet of \emph{growing size}. We improve on this
result, showing that already a \emph{binary} language can be
used. Moreover, we also fix a simple flaw\footnote{%
  For $1\leq p_1\leq p_2\leq p_3$ let
$$L= b^{p_1-1}(a^{p_2-p_1+1})^* + c_1^* + c_2^* + \cdots c_{p_3-p_2-1}^*,$$
over the alphabet $\{a,b\}\cup\{\,c_i\mid 1\leq i\leq
p_3-p_2-1\}$. For $p_1=p_2=1$ consider the above given language. In
case $p_3=2$ we get the language $L=a^*$ over the
alphabet~$\{a,b\}$, which requires a minimal \dfa\ with~$2$ states and
in case $p_3\geq 3$ we have $L=a^*+c_1^*+c_2^*+\cdots + c_{p_3-2}^*$
over the alphabet $\{a,b\}\cup\{\,c_i\mid 1\leq i\leq p_3-2\}$. Note
that $p_3-2\geq 1$ since $p_3\geq 3$ and therefore the latter set in
the union of the alphabet letters is non-empty. Thus, the minimal
\dfa\ accepting the language~$L$ has~$p_3+1$ states, which are
responsible for the Myhill-Nerode equivalence classes
$[\lambda]=\{\lambda\}$,\ $[a]=a^+$,\ $[c_1]=c_1^+$,\ $[c_2]=c_2^+$,
\ldots, $[c_{p_3-2}]=c_{p_3-2}^+$, and finally the equivalence class
$[b]=\{\,w\mid\mbox{$w\in b^+$ or $w$ contains at least two different
  letters}\,\}$. Observe, that all equivalence classes are accepting,
except the class~$[b]$, which represents the non-accepting sink
state. Hence in case $p_1=p_2=1$ and $p_2<p_3$ the statement on the
number of states of the minimal \dfa\ accepting the language~$L$
presented in~\cite{DaJe22} is off by one state. The claims on the
minimal pumping constants $\mpc$ and~$\mpl$ for~$L$ are correct. Note
that the case $p_3=1$ is shown in~\cite{DaJe22} with the help of a
unary language.}  on the size of the automaton in case $p_1=p_2=1$ and
$p_2<p_3$ in the original proof given in~\cite{DaJe22}.

\begin{thm}\label{thm:mpc-leq-mpl-leq-sc-binary}
  Let $p_1$,\ $p_2$, and~$p_3$ be three natural numbers with
  $1\leq p_1\leq p_2\leq p_3$. Then, there is a regular language~$L$
  over a \emph{binary alphabet} such that $\mpc(L) = p_1$,\
  $\mpl(L) = p_2$, and $\dsc(L) = p_3$.
\end{thm}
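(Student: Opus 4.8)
The plan is to exhibit one binary language per parameter regime and read off all three quantities from its minimal \dfa. I would start from the ``core'' $L_0=b^{p_1-1}(a^{m})^*$ with $m=p_2-p_1+1$, the binary shadow of the \cite{DaJe22} witness with the extra letters $c_i$ removed. Its minimal \dfa\ is a $b$-stem $P_0\to\cdots\to P_{p_1-2}\to C_0$, an $a$-cycle $C_0\to C_1\to\cdots\to C_{m-1}\to C_0$ with $C_0$ the unique accepting state, and one dead state, so $\dsc(L_0)=p_2+1$. The same computation as in the growing-alphabet case gives $\mpc(L_0)=p_1$, witnessed by the non-pumpable word $b^{p_1-1}$, and $\mpl(L_0)=p_2$, witnessed by $b^{p_1-1}a^{m}$, whose only usable pump is the whole $a$-block and which therefore cannot be pumped inside a prefix of length $p_2-1$. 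This already settles the sub-case $p_3=p_2+1$.

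The real work is to raise $\dsc$ from $p_2+1$ to an arbitrary $p_3>p_2+1$, i.e.\ to insert $k=p_3-p_2-1$ extra Myhill--Nerode classes without moving $\mpc$ or $\mpl$. I would hang off the accepting state $C_0$ a chain of $k$ new accepting states $T_1,\dots,T_k$, entered by reading a $b$, wired as an \emph{alternating staircase}: $T_i$ carries a self-loop on one letter and advances to $T_{i+1}$ on the other, the two letters alternating with $i$ (so $T_1$ self-loops on $b$), while $T_k$'s advancing transition goes to the dead state. This realises
$$L=b^{p_1-1}(a^{m})^*\bigl(\lambda+b^+a^*b^*a^*\cdots\bigr),$$
with $k$ alternating factors in the tail. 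A routine fooling-set argument shows the $T_i$ are pairwise inequivalent and distinct from the stem and cycle states, so the count is $(p_1-1)+m+k+1=(p_1-1)+(p_2-p_1+1)+(p_3-p_2-1)+1=p_3$ exactly.

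Why the two pumping constants survive is the crux, and the alternating shape is chosen precisely for it. For $\mpc$ I must show every word of length $\geq p_1$ is pumpable: if it contains an $a$-block it is pumped on the first $a^{m}$ as in $L_0$, and otherwise it is a tail word, where the key point is that pumping one occurrence of a letter inside any single block keeps the word in $L$ --- enlarging the block is harmless, and the only delicate case, shrinking a size-one block to nothing, merges its two equal-letter neighbours into a legal shorter tail word. For $\mpl$ the same pumps must sit inside a prefix of length $p_2$: the $a^{m}$-pump occupies the prefix $b^{p_1-1}a^{m}$ of length exactly $p_2$, and for a tail word a pumpable block can always be found among the first $p_2$ symbols (the one awkward configuration, a length-one leading $b$-extension followed by an $a$-block, is pumped on that $a$-block, whose first letter sits at position $p_1+1\leq p_2$). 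The hard part is therefore this block-merging bookkeeping, together with checking that no short tail word of length in $[p_1,p_2-1]$ is left non-pumpable and that the staircase does not accidentally split $C_0$; these are finite case distinctions that I have verified on small parameters.

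Since the staircase needs $k\geq0$, the remaining regime $p_2=p_3$ --- where the spare dead state would be one state too many --- must be handled on its own. Here I would make $b$ act as the identity and take $L=\{\,w\in\{a,b\}^*\mid |w|_a\equiv p_1-1\pmod{p_2}\,\}$. Its minimal \dfa\ is a pure $a$-cycle of length $p_2$ with a $b$-self-loop at every state, hence $\dsc(L)=p_2$ over a genuinely binary alphabet with no dead state, while $a^{p_1-1}$ and $a^{p_1-1+p_2}$ witness $\mpc=p_1$ and $\mpl=p_2$ as before, every word containing a $b$ being pumpable on that $b$ with period~$1$. Together the three regimes $p_3=p_2$, $p_3=p_2+1$, and $p_3>p_2+1$ cover all $1\leq p_1\leq p_2\leq p_3$.
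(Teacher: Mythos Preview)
Your construction is essentially the paper's own. For $p_2<p_3$ the paper takes
\[
L=b^{p_1-1}(a^{p_2-p_1+1})^*\bigl(B^{(+)}_{p_3-p_2-1}+\lambda\bigr),
\]
where $B^{(+)}_{k}$ is exactly your alternating staircase $b^+a^*b^*\cdots$ with $k$ blocks; the DFA in the paper's Figure~3 is your stem--cycle--staircase automaton, and the pumping arguments (pump $a^{m}$ if a full cycle is present, otherwise pump inside a tail block, with the special treatment of the ``length-one leading $b$'' handled by pumping the following $a$ when $p_1<p_2$) match. The paper separates out the sub-case $p_1=p_2=1$, but since then $m=1$ and the cycle degenerates to a self-loop, your general construction already covers it and coincides with the paper's $B^{(*)}_{p_3-1}$ up to swapping the roles of $a$ and $b$.

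The one place you deviate is $p_2=p_3$: the paper uses the purely unary cycle $\{a^{p_2 i+p_1-1}\mid i\ge 0\}$, whereas you add $b$-self-loops to every state and take $\{\,w\in\{a,b\}^*\mid |w|_a\equiv p_1-1\pmod{p_2}\,\}$. This is a small but genuine improvement: over a \emph{binary} alphabet the paper's unary language would acquire a sink state and have $\dsc=p_2+1$, not $p_2$; your $b$-self-loops avoid the sink and keep $\dsc=p_2$ while leaving the $\mpc$/$\mpl$ witnesses $a^{p_1-1}$ and $a^{p_1-1+p_2}$ intact.
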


\begin{proof}
  First we define some useful languages. For $k \geq 1$ let 
  $$B^{(+)}_k =
  \begin{cases}
  	b^+(a^*b^*)^{(k-1)/2}, & \mbox{if $k$ is odd,}\\
  	b^+(a^*b^*)^{(k-2)/2}a^*, & \mbox{if $k$ is even,}
  \end{cases}$$
  and
  $$B^{(*)}_k =
  \begin{cases}
  	b^*(a^*b^*)^{(k-1)/2}, & \mbox{if $k$ is odd,}\\
  	b^*(a^*b^*)^{(k-2)/2}a^*, & \mbox{if $k$ is even,}
  \end{cases}$$
  be languages over the alphabet~$\Sigma=\{a,b\}$.
  Observe that in all cases there are $k-1$ alternations between the blocks. Thus, e.g., $B^{(*)}_3=b^*a^*b^*$ and $B^{(+)}_4=b^+a^*b^*a^*$. In case $k=0$ the languages~$B^{(+)}_k$ and~$B^{(*)}_k$ are set to~$\emptyset$.
  Observe that~$B_k^{(+)}+\lambda$ is not equal to~$B_k^{(*)}$.

  Now we are ready for the proof. We distinguish whether~$p_2=1$ (this
  implies that $p_1=p_2=1$) or~$p_2=p_3$ (which implies
  $p_1\leq p_2=p_3$) or $p_2\notin\{1,p_3\}$.
  \begin{enumerate}
  \item Case $p_1=p_2=1$.  For~$p_3=1,2$ we simply use the \dfa s
    accepting the languages~$\Sigma^*$, $a^*$, \resp,
    for~$\Sigma=\{a,b\}$ being the input alphabet of those automata.
    For~$p_3\ge 3$ we observe that the languages~$B^{(*)}_{p_3-1}$
    fulfill~$\mpc(B^{(*)}_{p_3-1})=\mpl(B^{(*)}_{p_3-1})=p_1=p_2=1$
    since each accepted word can be pumped by its first
    letter. Additionally those languages are accepted by the \dfa~$A$
    shown in Figure~\ref{fig:m1-m2-1}---the non-accepting sink state
    is not shown.
    \begin{figure}[!htb]
      \begin{center}
        \begin{tikzpicture}[->,>=stealth,shorten >=1pt,auto,node distance=0.9cm,
	semithick,scale=0.7, every node/.style={scale=0.7},initial text=]
	\tikzstyle{every state}=[fill={rgb:black,1;white,10},ellipse]
	
	\node[initial, state,accepting] (p0) [thick] {$q_0$};						
	\node[state,accepting] (p1) [right = 1.4cm of p0] [thick]{$q_1$};
	\node[state,accepting] (p2) [right = 1.4cm of p1] [thick]{$q_{2}$};
	\node[state,accepting] (p3) [right = 1.4cm of p2] [thick]{$q_3$};
	\node[state,accepting] (p4) [right = 2.4cm of p3] [thick]{$q_{p_3-2}$};

	\path 
	
	(p0) edge[] node[midway,above]{$a$} (p1)
	(p1) edge[] node[midway,above]{$b$} (p2)
	(p2) edge[] node[midway,above]{$a$} (p3)
	(p3) edge[dashed] node[midway,above]{} (p4)
	(p4) edge[loop above] node[midway,above]{$a$} (p4)
	
	(p0) edge[loop above] node[midway,above]{$b$} (p0)
	(p1) edge[loop above] node[midway,above]{$a$} (p1)
	(p2) edge[loop above] node[midway,above]{$b$} (p2)
	(p3) edge[loop above] node[midway,above]{$a$} (p3)

	;
\end{tikzpicture}	
      \end{center}
    \caption{The automaton~$A$ for~$p_1=p_2=1$ and~$p_3-1$ even, where the non-accepting sink state~$q_{p_3-1}$ and all transitions to it are not shown. Recall, that the letter on the transition to~$q_{p_3-2}$ depend on the parity of~$p_3-2$.}
      \label{fig:m1-m2-1}
    \end{figure}
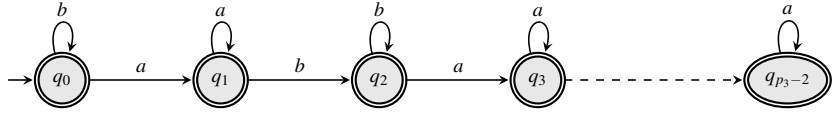
    It is not hard to see that for each state of~$A$ there is a unique
    shortest word that maps the state onto the non-accepting state.
    Therefore we have that~$A$ is minimal
    and~$\dsc(B^{(*)}_{p_3-1})=\dsc(A)=p_3$.
		
  \item Case~$p_1\leq p_2=p_3$. In this case we define the
    unary \dfa\
    $$A=(\{q_0,q_1,\ldots,q_{p_3-1}\},\{a\},{}\cdot_A{},q_0,\{q_{p_1-1}\}),$$ with~$q_i\cdot_A a = q_{i+1\bmod p_3}$, for $0\leq i\leq p_3-1$.
    By inspecting Figure~\ref{fig:thm-mpc-x-mpl-eq-sc} which shows~$A$
    it is not hard to see
    that~$L(A)=\{\,a^{p_2\cdot i+p_1-1}\mid i\ge 0\,\}$ and that~$A$
    is already minimal; thus $\dsc(L(A))=p_3$.
    \begin{figure}[!htb]
      \begin{center}
        \begin{tikzpicture}[->,>=stealth,shorten >=1pt,auto,node distance=0.9cm,
	semithick,scale=0.7, every node/.style={scale=0.7},initial text=]
	\tikzstyle{every state}=[fill={rgb:black,1;white,10},ellipse]
	
	\node[initial, state] (p0) [thick] {$q_0$};						
	\node[state] (p1) [right = 1.4cm of p0] [thick]{$q_1$};
	\node[state,accepting] (p2) [right = 1.4cm of p1] [thick]{$q_{p_1-1}$};
	\node[state] (p3) [right = 1.4cm of p2] [thick]{$q_{p_1}$};
	\node[state] (p4) [right = 2.4cm of p3] [thick]{$q_{p_3-1}$};

	\path 
	
	(p0) edge[] node[midway,above]{$a$} (p1)
	(p1) edge[dashed] node[midway,above]{$a$} (p2)
	(p2) edge[] node[midway,above]{$a$} (p3)
	(p3) edge[dashed] node[midway,above]{$a$} (p4)
	(p4) edge[bend right=15] node[midway,above]{$a$} (p0)
	;
\end{tikzpicture}	
      \end{center}
      \caption{The unary automaton~$A$ for $p_1<p_2=p_3$.}
      \label{fig:thm-mpc-x-mpl-eq-sc}
    \end{figure}
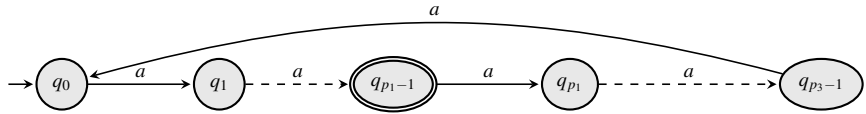
    So every word in the language~$L(A)$ that has length greater or
    equal~$p_1$ contains the sub-word~$a^{p_2}$ which implies that it
    is pumpable. On the other hand the word~$a^{p_1-1}$ cannot be
    pumped since it is the shortest accepting word; hence it cannot be
    shortened by pumping.  Therefore~$\mpc(L(A))=p_1$
    and~$\mpl(L(A))=p_2$.
	
	\item Case~$p_2\notin\{1,p_3\}$. 
	We define the language
	$$L= b^{p_1-1}(a^{p_2-p_1+1})^* (B^{(+)}_{p_3-p_2-1} + \lambda).$$
	This language is accepted by the \dfa\ shown in Figure~\ref{fig:thm-mpc-l-mpl-l-sc}; again the non-accepting sink state is not shown.
	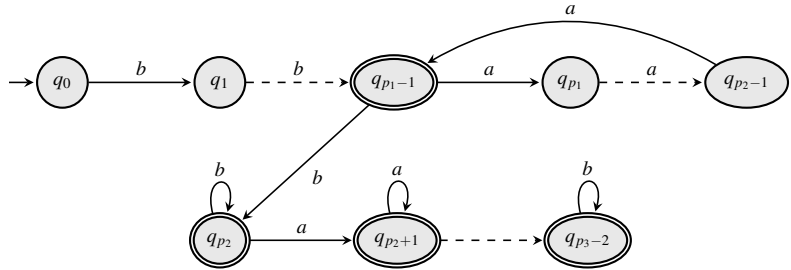
\begin{figure}[!htb]
		\begin{center}
			\begin{tikzpicture}[->,>=stealth,shorten >=1pt,auto,node distance=0.9cm,
	semithick,scale=0.7, every node/.style={scale=0.7},initial text=]
	\tikzstyle{every state}=[fill={rgb:black,1;white,10},ellipse]
	
	\node[initial, state] (p0) [thick] {$q_0$};						
	\node[state] (p1) [right = 1.4cm of p0] [thick]{$q_1$};
	\node[state,accepting] (p2) [right = 1.4cm of p1] [thick]{$q_{p_1-1}$};
	\node[state] (p3) [right = 1.4cm of p2] [thick]{$q_{p_1}$};
	\node[state] (p4) [right = 1.4cm of p3] [thick]{$q_{p_2-1}$};

	\node[state,accepting] (p5) [below = 1.4cm of p1] [thick]{$q_{p_2}$};
	\node[state,accepting] (p6) [right = 1.4cm of p5] [thick]{$q_{p_2+1}$};
	\node[state,accepting] (p7) [right = 1.4cm of p6] [thick]{$q_{p_3-2}$};

	\path 
	
	(p0) edge[] node[midway,above]{$b$} (p1)
	(p1) edge[dashed] node[midway,above]{$b$} (p2)
	(p2) edge[] node[midway,above]{$a$} (p3)
	(p3) edge[dashed] node[midway,above]{$a$} (p4)
	(p4) edge[bend right] node[midway,above]{$a$} (p2)
	
	(p2) edge[] node[midway]{$b$} (p5)
	(p5) edge[] node[midway,above]{$a$} (p6)
	(p6) edge[dashed] node[midway,above]{} (p7)
	
	(p5) edge[loop above] node[midway,above]{$b$} (p5)
	(p6) edge[loop above] node[midway,above]{$a$} (p6)
	(p7) edge[loop above] node[midway,above]{$b$} (p7)

	;
\end{tikzpicture}	
		\end{center}
        \caption{The automaton~$A$ for the language~$L$ in case $p_3-p_2-1$ is odd, where the non-accepting sink state~$q_{p_3-1}$ and all transitions to it are not shown. In case $p_3-p_2-1$ is even the lower sub-chain of states looks similar by alternatively reading~$a$'a and~$b$'s, has appropriate self-loops on the states, and end with the letter~$a$.}
		\label{fig:thm-mpc-l-mpl-l-sc}
	\end{figure}
	Observe that each state~$q_i$, for~$i\in\{0,1,\dots,p_2-1\}\setminus\{p_1-1\}$, is only mapped by one letter onto a state that is unequal to the sink state while this is \emph{not} true for each state~$q_i$, for~$i\in\{p_2,p_2+1,\dots,p_3-3,p_1-1\}$.
	Then one can easily prove that this \dfa\ is minimal. Thus, the automaton~$A$ has~$p_3$ states.
	Further we observe that the word~$b^{p_1-1}$ is in~$L$ but it cannot be pumped since no shorter word is in~$L$. Therefore, $\mpc(L)\ge p_1$. Additionally we observe that $w\in b^{p_1-1}(a^{p_2-p_1+1})^+$ is a word in~$L$ which is only pumpable by~$a^{p_2-p_1+1}$. Since the shortest prefix of~$w$ that ends with~$a^{p_2-p_1+1}$ has length~$p_2$ we obtain that~$\mpl(L)\ge p_2$.
	Clearly we can pump all words in~$b^{p_1-1}(a^{p_2-p_1+1})^+B^{(+)}_{p_3-p_2-1}$ in the same way which implies that none of these words has an impact on~$\mpc(L)$ and~$\mpl(L)$.
	Last we see that all words in~$b^{p_1-1}B^{(+)}_{p_3-p_2-1}$ can be pumped by their first letter or by their $(p_1+1)$th letter, \resp, for~$p_1=p_2$ and~$p_1<p_2$.
	So we obtain that all words in~$L$ which have length at least~$p_1$ can be pumped by a sub-word in their prefix of length at most~$p_2$. Thus, we have~$\mpc(L)=p_1$ and~$\mpl(L)=p_2$.
      \end{enumerate}
      This completes the construction and proves the stated claim for languages over a binary alphabet.
\end{proof}

The previous theorem is best possible w.r.t.\ the alphabet size,
because for unary languages there are infinitely many combinations of
minimal pumping constants like, e.g., $\mpc(L)=\mpl(L)=1$ and
$\dsc(L)\geq 2$, which cannot be achieved by any unary
language~$L$. This is due to the fact that if $\mpl(L)=1$, then the
language~$L$ is suffix-closed, and~$\{a\}^*$ is the only suffix-closed
unary language. It is not hard to prove that
Theorem~\ref{thm:mpc-leq-mpl-leq-sc-binary} is also valid if the
nondeterministic state complexity instead of the deterministic state
complexity is considered.

\section{Results on Sub-Word Pumping}

Let us first introduce a pumping lemma which is a straight forward
generalization of Lemma~\ref{lem:pumping} with the additional
$|xy|\leq p$ condition. The lemma can be found in~\cite[page~49,
Theorem~3.10]{Sa82} and reads as follows--- roughly speaking, this
pumping lemma allows pumping of sub-words, whose length is large
enough, at any position of the considered word; hence we sometimes
speak of \emph{sub-word pumping}.

\begin{lem}\label{lem:pumping-xy-length}
  Let~$L$ be a regular language over~$\Sigma$. Then there is a
  constant~$p$ (depending on~$L$) such that the following holds: If
  $\tilde{w}=uwv\in L$ and $|w|\geq p$, where~$u$ and~$v$ are any (possibly
  empty) words, then there are words $x\in \Sigma^*$,\
  $y\in \Sigma^+$, and $z\in \Sigma^*$ such that $w = xyz$,\
  $|xy|\leq p$, and $uxy^tzv\in L$ for $t\geq 0$.
\end{lem}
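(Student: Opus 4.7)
The plan is to adapt the classical pigeonhole proof of Lemma~\ref{lem:pumping}, except that the pigeonhole argument is applied to the states visited while reading the sub-word~$w$, with the state reached after scanning the prefix~$u$ playing the role of an initial state. Concretely, I would fix a \dfa\ $A=(Q,\Sigma,{}\cdot{},q_0,F)$ with $L(A)=L$ and set $p:=|Q|$, which will serve as the constant in the lemma.

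Next, assuming $\tilde{w}=uwv\in L$ with $|w|\geq p$, I would write $w=a_1a_2\cdots a_m$ with $m\geq p$, let $q:=q_0\cdot u$ be the state reached after consuming~$u$, and consider the sequence of states $q^{(i)}:=q\cdot(a_1a_2\cdots a_i)$ for $0\leq i\leq p$. Since this is a sequence of $p+1$ elements taken from a set of size~$p$, pigeonhole produces indices $0\leq i<j\leq p$ with $q^{(i)}=q^{(j)}$. Setting $x:=a_1\cdots a_i$,\ $y:=a_{i+1}\cdots a_j$, and $z:=a_{j+1}\cdots a_m$ then gives the factorization $w=xyz$ with $y\in\Sigma^+$ (as $j>i$) and $|xy|=j\leq p$, matching the lemma's requirements on~$x$,~$y$, and~$z$.

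To conclude, I would verify the pumping identity. Because $q\cdot x=q^{(i)}=q^{(j)}=q\cdot xy$, a short induction on~$t$ shows $q\cdot xy^t=q\cdot x$ for every $t\geq 0$, and hence $q\cdot xy^tz=q\cdot xyz=q\cdot w$. Prepending~$u$ and appending~$v$ then gives $q_0\cdot(uxy^tzv)=q_0\cdot(uwv)\in F$, so $uxy^tzv\in L$ for all $t\geq 0$, which is the claim.

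There is no genuine obstacle here: the whole argument is the standard pumping-lemma pigeonhole, and the only adjustment needed is that the outer contexts~$u$ and~$v$ are transported unchanged by the deterministic transition function, which is what makes the conclusion survive despite pumping an interior sub-word. As a by-product the proof also delivers the inequality $\mps(L)\leq\dsc(L)$, completing the analogue of the chain already recorded for~$\mpc$ and~$\mpl$.
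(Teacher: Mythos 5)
Your proof is correct and is exactly the standard pigeonhole argument that the paper relies on: the lemma is cited from Savitch rather than reproved, and the paper's own remark in Theorem~\ref{thm:mps-dsc} that the $\dsc(L)$ upper bound is ``immediate by the proof of the lemma'' presupposes precisely your construction with $p=|Q|$ and the state sequence started from $q_0\cdot u$. Nothing is missing, and the by-product $\mps(L)\leq\dsc(L)$ you note is the same conclusion the paper draws.
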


Similarly as for the aforementioned pumping lemmata, one can define
the minimal pumping constant~$\mps(L)$, for a regular language, as the
smallest number~$p$ that satisfies the condition of
Lemma~\ref{lem:pumping-xy-length} when considering~$L$. Observe, that
the condition of the lemma requires that any sub-word that is long
enough can be pumped.

\subsection{Comparing $\mps$ to Other Minimal Pumping Constants}

We first prove some basic properties:

\begin{lem}\label{lem:mps-zero-one}
	Let~$L$ be a regular language over~$\Sigma$. Then
	\begin{itemize}
		\item $\mps(L)=0$ if and only if~$L=\emptyset$, and 
		\item $\mps(L)=1$, implies that~$L$ is prefix- and
                  suffix-closed.\footnote{%
                    Moreover, $\mps(L)=1$, also implies that~$L$ is
                    factor-closed.  A regular language~$L$ is
                    \textit{factor-closed} if~$L$ contains all factors
                    of all words~$w\in L$. We
                    call~$w_{i_1}w_{i_2}\ldots w_{i_k}$ a
                    \textit{factor} of the word~$w_1w_2\ldots w_n$
                    if~$1\le i_1<i_2<\cdots<i_k\le n$ are natural
                    numbers.}
	\end{itemize}
\end{lem}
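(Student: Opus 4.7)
The plan is to observe that the pumping condition of Lemma~\ref{lem:pumping-xy-length} becomes extremely rigid when $p\in\{0,1\}$, because the combined constraints $|xy|\le p$ and $y\in\Sigma^+$ leave almost no freedom in the factorisation $w=xyz$.

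For the first bullet I would argue by contrapositive for one direction and vacuity for the other. Assuming $\mps(L)=0$, pick any hypothetical $\tilde w\in L$ and apply the pumping condition with $u=v=\lambda$ and $w=\tilde w$, so that $|w|\ge 0$ holds trivially; the conclusion then demands $x,y,z$ with $|xy|\le 0$ and $|y|\ge 1$, which is impossible. Hence $L=\emptyset$. Conversely, if $L=\emptyset$, the premise $\tilde w=uwv\in L$ never fires, so the condition is vacuously satisfied at $p=0$, giving $\mps(L)\le 0$ and thus $\mps(L)=0$.

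For the second bullet, assume $\mps(L)=1$. Fix an arbitrary word $s=s_1s_2\cdots s_n\in L$ and an index $j\in\{1,\dots,n\}$, and apply Lemma~\ref{lem:pumping-xy-length} with $u=s_1\cdots s_{j-1}$, $w=s_j$, $v=s_{j+1}\cdots s_n$, so that $|w|=1\ge 1=p$. The conditions $|xy|\le 1$ and $|y|\ge 1$ uniquely force $x=\lambda$, $y=s_j$ and $z=\lambda$; taking $t=0$ in the pumping conclusion then yields $s_1\cdots s_{j-1}s_{j+1}\cdots s_n\in L$, that is, the letter at position $j$ may be freely deleted. Specialising $j=n$ and iterating gives prefix-closedness; specialising $j=1$ and iterating gives suffix-closedness. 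The footnote's factor-closedness statement (in the subsequence sense used there) follows by essentially the same argument, since any subsequence of $s$ is obtained by finitely many single-letter deletions at chosen positions.

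There is no real obstacle: both parts reduce to the observation that, for these tiny values of $p$, the factorisation of $w$ that the lemma promises must be trivial. The only care needed is to set up the decomposition $\tilde w=uwv$ so that the single letter one wants to delete coincides with $w$ itself, which is precisely what makes both cases collapse to a one-line check.
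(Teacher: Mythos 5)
Your proof is correct and follows essentially the same route as the paper's: for $p=0$ the constraints $|xy|\le 0$ and $y\in\Sigma^+$ are incompatible, and for $p=1$ taking $w$ to be a single letter forces $y=w$, so $t=0$ deletes any chosen letter, giving prefix-, suffix-, and factor-closedness. The only difference is that you spell out the iteration of single-letter deletions, which the paper leaves implicit.
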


\begin{proof}
	First we observe that there are no words~$x\in \Sigma^*$,\
	$y\in \Sigma^+$, and $z\in \Sigma^*$ such that $|xy|\leq 0$.
	This implies directly that the statement of Lemma~\ref{lem:pumping-xy-length}
	is fulfilled for~$p=0$ and the language~$L$ if and only if~$L=\emptyset$.
	Next we have that~$\mps(L)=1$ implies that for all~$w$ with~$|w|\ge 1$
	and all words~$u,v\in \Sigma^*$ such that~$\tilde{w}=uwv\in L$ there are
	words $x\in \Sigma^*$,\
	$y\in \Sigma^+$, and $z\in \Sigma^*$ such that $w = xyz$,\
	$|xy|\leq 1$, and $uxy^tzv\in L$ for $t\geq 0$.
	In especially this holds for~$w\in \Sigma$ which implies that~$y=w$.
	Since~$uxy^0zv=uxzv\in L$ for all letters~$y=w\in\Sigma$ and all
	(possibly empty) words~$u$ and~$v$ we obtain
	that each word~$\tilde{w}$ of~$L$ can be pumped by each of its letters, i.e,
	by each letter of each prefix and each suffix of~$\tilde{w}$.
	Hence, $L$ is prefix- and suffix-closed.
\end{proof}

Next we want to compare~\mps\ with the other minimal pumping constants
considered in~\cite{DaJe22}. We find the following situation---similarly as
in Theorem~\ref{thm:mpc-leq-mpl-leq-sc-binary} the nondeterministic
state complexity is also an upper bound:

\begin{thm}\label{thm:mps-dsc}
  Let~$L$ be a regular language~$L$ over~$\Sigma$. Then 
  $\mpc(L)\leq \mpl(L)\le \mps(L)\leq \dsc(L)$.
\end{thm}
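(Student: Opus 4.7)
The plan is to handle the three inequalities in order, noting that only the last one requires a new argument, since the first was already established in~\cite{DaJe22} and the middle one is an immediate specialization.

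For the first inequality $\mpc(L)\le\mpl(L)$, I would simply cite the fact that the pumping lemma underlying $\mpl$ differs from the one underlying $\mpc$ only by the additional constraint $|xy|\le p$. Hence any constant $p$ that witnesses the statement for $\mpl$ automatically witnesses the (weaker) statement for $\mpc$, so the minimum such~$p$ can only be smaller in the $\mpc$-case. This is already noted in the excerpt as a known fact from~\cite{DaJe22}.

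For the middle inequality $\mpl(L)\le\mps(L)$, I would argue that Lemma~\ref{lem:pumping-xy-length} implies the modified pumping lemma of~\cite{RaSc59,Br84,HoUl79} with the same constant. Given any $w\in L$ with $|w|\ge\mps(L)$, apply Lemma~\ref{lem:pumping-xy-length} to the trivial decomposition $\tilde{w}=uwv$ with $u=v=\lambda$. This yields $w=xyz$ with $y\in\Sigma^+$, $|xy|\le\mps(L)$, and $xy^tz=uxy^tzv\in L$ for every $t\ge 0$, which is exactly the statement defining $\mpl$. Thus $\mpl(L)\le\mps(L)$.

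The main step, and the only one that requires a genuine (if routine) argument, is $\mps(L)\le\dsc(L)$. I would set $n=\dsc(L)$ and fix a minimal \dfa~$A=(Q,\Sigma,{}\cdot{},q_0,F)$ accepting~$L$. To show that $p=n$ satisfies Lemma~\ref{lem:pumping-xy-length}, take any $\tilde{w}=uwv\in L$ with $|w|\ge n$, write $w=w_1w_2\cdots w_m$, and let $q=q_0\cdot u$. Consider the sequence of states $r_i=q\cdot w_1w_2\cdots w_i$ for $i=0,1,\dots,n$. Since there are $n+1$ such states in a set of size $n$, by the pigeonhole principle there are indices $0\le i<j\le n$ with $r_i=r_j$. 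Set $x=w_1\cdots w_i$, $y=w_{i+1}\cdots w_j$, and $z=w_{j+1}\cdots w_m$, so that $w=xyz$, $y\in\Sigma^+$, and $|xy|=j\le n=p$. Because $q\cdot xy^tz=q\cdot xyz$ for every $t\ge 0$ (looping at the state $r_i=r_j$), the computation of~$A$ on $uxy^tzv$ ends in the same accepting state as on $\tilde{w}$, so $uxy^tzv\in L$. Hence $n$ satisfies the condition, and $\mps(L)\le n=\dsc(L)$.

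No step is really an obstacle; the only care needed is to observe that the pigeonhole is applied to the $n+1$ states visited while reading~$w$ starting from $q=q_0\cdot u$, not while reading~$\tilde{w}$ from~$q_0$, so that the resulting cut lies inside~$w$ as required by the lemma.
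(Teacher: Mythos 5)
Your proof is correct and follows essentially the same route as the paper: the first inequality is quoted from~\cite{DaJe22}, the middle one is obtained by specializing Lemma~\ref{lem:pumping-xy-length} to $u=v=\lambda$, and the last one is the standard pigeonhole argument on the states visited while reading~$w$ from $q_0\cdot u$, which the paper simply cites from Savitch's proof rather than writing out. Your explicit care that the loop is found inside~$w$ (starting the state sequence at $q_0\cdot u$) is exactly the point that makes the cited argument go through.
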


\begin{proof}
  It suffices to show $\mpl(L)\leq\mps(L)\leq\dsc(L)$. For the
  first inequality observe that if we set~$u=v=\lambda$ in
  Lemma~\ref{lem:pumping-xy-length} we obtain statement of
  Lemma~\ref{lem:pumping} with the additional length condition
  $|xy|\leq p$, which implies that $\mpl(L)\leq\mps(L)$.  Finally, the
  $\dsc(L)$ upper bound is immediate by the proof of the lemma given
  in~\cite[page~49, Theorem~3.10]{Sa82}.
\end{proof}

Now the question arises whether we can come up with a similar result
as stated in Theorem~\ref{thm:mpc-leq-mpl-leq-sc-binary}, but now also
taking the minimal pumping constant w.r.t.\
Lemma~\ref{lem:pumping-xy-length} into account. The following Theorem
will be very useful for this endeavor; a similar statement was shown
in~\cite{HoRa23a} for the minimal pumping constant w.r.t.\ Jaffe's
pumping lemma~\cite{Ja78}, a pumping lemma that is necessary and
sufficient for regular languages.

\begin{thm}\label{thm:loops-only-decrease-mpx}
  Let~$A=(Q,\Sigma,{}\cdot_A{},q_0,F)$ be a minimal \dfa,
  state~$q\in Q$, and letter $a\in\Sigma$. Define the finite automaton
  $B=(Q,\Sigma,{}\cdot_B{},q_0,F)$ with the transition
  function~$\cdot_B$ that is equal to the transition function
  of~$\cdot_A$, except for the state~$q$ and the letter~$a$, where
  $q\cdot_B a=q$. Then, $K(L(B))\leq K(L(A))$
  for~$K\in\{\mpc,\mpl,\mps\}$.
\end{thm}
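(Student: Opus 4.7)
The plan is to exploit the fact that $A$ and $B$ differ only at the single transition on $(q,a)$: where $A$ moves to $q \cdot_A a$, $B$ loops at $q$. Define a deletion map $h : \Sigma^* \to \Sigma^*$ by setting, for each $w$, $h(w)$ to be the word obtained by tracing $B$'s run on $w$ and removing every letter $a$ that is read while at state $q$; the deleted steps are exactly the $B$-self-loops at $q$. Then $A$'s run on $h(w)$ visits the same sequence of states as $B$'s run on $w$ with those self-loop steps collapsed, and in particular both runs end at the same state. This yields the fundamental identity $w \in L(B) \iff h(w) \in L(A)$.

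Fix $K \in \{\mpc, \mpl, \mps\}$ and put $p = K(L(A))$. The argument is analogous for all three measures, so I describe it for $K = \mps$ (the most intricate case) and note that $\mpl$ and $\mpc$ follow by specialising to $u = v = \lambda$ and, for $\mpc$, dropping the length condition $|xy| \le p$. Consider an arbitrary factorisation $\tilde{w} = uwv \in L(B)$ with $|w| \ge p$; I distinguish two cases depending on where $B$'s run on $\tilde{w}$ first uses the $(q,a)$-self-loop inside the $w$-window.

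In the \emph{early case}, the first such step $j$ satisfies $j - |u| \le p$: take $x$ to be the prefix of $w$ strictly before position $j$, put $y = a$, and let $z$ be the remaining suffix of $w$. Then $|xy| \le p$, and since $B$ stays at $q$ on every $a$, every word $uxy^tzv$ lies in $L(B)$. In the \emph{late-or-never case}, no self-loop occurs within the first $p$ positions of $w$, so those letters survive unchanged under $h$. Writing $h(\tilde{w}) = u'w'v'$ for the pieces of $\tilde{w}$ with their self-loop $a$'s removed, we have $|w'| \ge p$, so Lemma~\ref{lem:pumping-xy-length} applied to $L(A)$ furnishes a decomposition $w' = x'y'z'$ with $|x'y'| \le p$, $y' \ne \lambda$, and $u'x'(y')^tz'v' \in L(A)$ for every $t \ge 0$. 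Since $|x'y'| \le p$, the word $x'y'$ coincides letter-for-letter with the first $|x'y'|$ letters of $w$, so $x'$ and $y'$ lift to literal sub-words $x$ and $y$ of $w$ occupying the same positions; let $z$ be the remainder. Tracing $B$'s computation then gives $h(uxy^tzv) = u'x'(y')^tz'v'$, and the fundamental identity yields $uxy^tzv \in L(B)$.

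The main obstacle is the verification in the late-or-never case that the lifted $y$ really is a pumping loop for $B$. This rests on the observation that the positions of $xy$ in $w$ lie entirely within the self-loop-free prefix, so the transitions of $A$ and $B$ agree throughout the sub-run reading $y$ from the state reached after $ux$. Hence $y$ loops back in $B$ exactly as $y'$ does in $A$, and the equality $h(uxy^tzv) = u'x'(y')^tz'v'$ follows by composing the runs on the pieces $u$, $x$, $y^t$, $z$, $v$ in turn.
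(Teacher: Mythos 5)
Your overall strategy---pump at the new self-loop when the $B$-run traverses it early enough, and otherwise transfer a pumping decomposition from $L(A)$---is the same two-case split that the paper's proof uses, and your deletion map $h$ together with the identity $w\in L(B)\iff h(w)\in L(A)$ is a clean formalisation of the correspondence between the two automata; the early case is correct. The gap lies in the late-or-never case, at the step ``hence $y$ loops back in $B$ exactly as $y'$ does in $A$'' and the ensuing identity $h(uxy^tzv)=u'x'(y')^tz'v'$. Lemma~\ref{lem:pumping-xy-length} applied to $L(A)$ guarantees only the language-theoretic fact $u'x'(y')^tz'v'\in L(A)$; it does \emph{not} guarantee that $y'$ is a cycle of the automaton $A$ at the state $q_0\cdot_A u'x'$. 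The minimal constant $\mps(L(A))$ can be strictly smaller than the number of states of $A$ and be witnessed only by non-cyclic decompositions: for $L=\{\,a^n\mid n\neq 2\,\}$ one has $\mpc(L)=2$, and the word $a^3$ is pumpable only with $y\in\{aa,aaa\}$, neither of which is a cycle of the minimal \dfa. Consequently, for $t\neq 1$ the $B$-run on the second and later copies of $y$ (and, for $t=0$, on $z$) starts from $q_0\cdot_A u'x'(y')^{t'}$ for some $t'\ge 1$, which may differ from $q_0\cdot_A u'x'$; that portion of the run is not covered by your ``self-loop-free prefix'' observation, it may traverse the modified transition $(q,a)$, so $h$ may delete letters there, the claimed equality fails, and $uxy^tzv\in L(B)$ does not follow.

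To close the argument you need an additional idea for the situation where a pumped copy of $y$ (or the continuation after it) routes through $(q,a)$---for instance, showing that in that event the offending word can be re-analysed via the self-loop, or selecting the decomposition of $w'$ more carefully than the bare statement of the pumping lemma allows. For comparison, the paper's own proof makes exactly the same two-case split and is equally terse at this point, disposing of the second case with the remark that on words avoiding the self-loop ``the pumping is that induced by $A$''; your more detailed write-up makes the hidden difficulty visible but does not yet resolve it.
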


\begin{proof}%{theorem}{thm:loops-only-decrease-mpx}
  Obviously we have that in each word of the form~$w=xaz$
  with~$q_0\cdot_B x=q$ the~$(|x|+1)$st letter can be pumped, because
  by construction
  $$w=xazv\in L(B)\quad\mbox{if and only if}\quad xa^tzv\in L(B),$$
  for all $t\geq 0$ and each~$v\in\Sigma^*$. On the other hand the
  change of the~$a$-transition of~$q$ does not affect all other words
  not satisfying the above property. On these words the pumping is
  that of the pumping induced by the device~$A$. Thus, we conclude
  that the three mentioned minimal pumping constants for the
  language~$L(B)$ are bounded by the according ones of~$A$.
\end{proof}

	Observe, that the statement of Lemma~\ref{lem:pumping-xy-length} for
	the constant~$n$ can also be understood as follows: for each word~$\tilde{w}$
	in~$L$ and each sub-word~$w$ of~$\tilde{w}$ with length at least~$n$ there is
	a sub-word~$y$ of~$w$ such that~$y$ can be pumped in~$\tilde{w}$.  We
	will use this alternative version of Lemma~\ref{lem:pumping-xy-length}
	in the lemmata to come without further notice.
	
\begin{thm}\label{thm:one-to-rule-em-all}
	Let $p_1$,\ $p_2$,\ $p_3$, and~$p_4$ be four natural numbers with $1\leq p_1\leq p_2\leq p_3\le p_4$. Then, there is a regular language~$L$ over a
	\emph{quinary alphabet} such that
	$\mpc(L) = p_1$,\ $\mpl(L) = p_2$, $\mps(L) = p_3$, and $\dsc(L)=p_4$\ holds.
\end{thm}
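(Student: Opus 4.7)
The plan is to extend the strategy of Theorem~\ref{thm:mpc-leq-mpl-leq-sc-binary} by adding a fourth sub-structure that supplies a non-pumpable sub-word of length $p_3-1$ without disturbing the other three measures. I would first dispatch the boundary cases where some of the $p_i$ coincide (for instance $p_1=p_2$, $p_2=p_3$, or $p_3=p_4$), either by falling back on Theorem~\ref{thm:mpc-leq-mpl-leq-sc-binary} directly or by a small unary/binary modification, since in each of these degenerate situations one of the four components below collapses.

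For the generic case $1<p_1<p_2<p_3<p_4$ I would build, over $\{a,b,c,d,e\}$, a DFA whose states come from four glued components: (i) a chain $q_0\to\cdots\to q_{p_1-1}=h$ of length $p_1-1$ on $b$ (the source of $\mpc=p_1$); (ii) an $a$-cycle $h\to s_1\to\cdots\to s_{p_2-p_1}\to h$ of length $p_2-p_1+1$ (the source of $\mpl=p_2$); (iii) a ``middle'' cycle of length $p_3-p_2$ on letters from $\{c,d,e\}$ through fresh states $m_1,\ldots,m_{p_3-p_2-1}$, grafted at $s_{p_2-p_1}$ rather than at the hub $h$ itself (this is the key new ingredient); and (iv) a tail of the $B^{(+)}$-shape from the proof of Theorem~\ref{thm:mpc-leq-mpl-leq-sc-binary}, consisting of $p_4-p_3-1$ self-looping states attached at $h$ by an unused letter, whose only purpose is to push $\dsc$ up to $p_4$. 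All undefined transitions go to a sink.

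The verification is a sub-word case analysis. The arguments of Theorem~\ref{thm:mpc-leq-mpl-leq-sc-binary} give $\mpc=p_1$ and $\mpl=p_2$, since the only accepting word of length below $p_2$ is still $b^{p_1-1}$ and every word of length $\geq p_2$ admits $y=a^{p_2-p_1+1}$ with $|xy|=p_2$. For $\mps\geq p_3$ the witness is the sub-word
\[
w\;=\;b^{p_1-1}\,a^{p_2-p_1}\,c\,M_2\cdots M_{p_3-p_2-1}
\]
of length $p_3-1$: its state path $q_0,\ldots,q_{p_1-1},s_1,\ldots,s_{p_2-p_1},m_1,\ldots,m_{p_3-p_2-1}$ is a sequence of $p_3$ pairwise distinct non-sink states, so no decomposition $w=xyz$ with $|xy|\leq p_3-1$ and $y\neq\lambda$ admits a valid pump (not even through Myhill--Nerode equivalence, since the candidate states that could be used as ``pseudo-loops'' are separated by their $c$-behaviour). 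For the matching upper bound $\mps\leq p_3$ one checks that every sub-word of length $\geq p_3$ must contain either a full $a$-cycle, a complete rotation through the middle cycle, or two consecutive visits of a tail state (via its self-loop), each of which delivers a pump with $|xy|\leq p_3$. The equality $\dsc=p_4$ is then a routine distinguishability check, analogous to the one in Theorem~\ref{thm:mpc-leq-mpl-leq-sc-binary}. The main obstacle is the choreography at the interfaces between the $a$-cycle, the middle cycle, and the outer tail: the middle must be reachable only after one full $a$-cycle (else a short-cut sub-word $b^{p_1-1}c\cdots$ would be non-pumpable and push $\mps$ above $p_3$, or force $\mpl$ above $p_2$), and the tail must be wired so that its $B^{(+)}$-self-loops pump every sub-word that traverses it; choosing the letters and attachment points to meet all these constraints simultaneously, particularly when $p_4-p_3$ is large, is the genuinely delicate part of the proof.
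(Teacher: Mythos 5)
There is a genuine gap, and it sits exactly at the point you yourself flag as delicate: your witness for $\mps(L)\ge p_3$ is a \emph{prefix} of the accepted word. You take $w=b^{p_1-1}a^{p_2-p_1}cM_2\cdots$ starting at $q_0$, i.e.\ $\tilde{w}=uwv$ with $u=\lambda$, and argue that no decomposition $w=xyz$ with $|xy|\le p_3-1$ pumps. But when $u=\lambda$, a decomposition of the sub-word $w$ with $|xy|\le p$ and $uxy^tzv\in L$ is literally the same thing as a decomposition of the whole word $\tilde{w}=xy(zv)$ with $|xy|\le p$; so if no sub-word of the $(p_3-1)$-prefix of $\tilde{w}$ can be pumped, then in particular no sub-word of its $p_2$-prefix can be pumped, and $\tilde{w}$ witnesses $\mpl(L)\ge p_3>p_2$. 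Indeed your own two claims contradict each other: the $p_2$-prefix of this $\tilde{w}$ is $b^{p_1-1}a^{p_2-p_1}c$, which does \emph{not} contain the full cycle $a^{p_2-p_1+1}$ that your $\mpl\le p_2$ argument relies on. Grafting the middle cycle at $s_{p_2-p_1}$ cannot fix this: that state is reached on the first pass, before any $a$-cycle is completed; and if you instead force a full $a$-cycle before the middle is reachable, the prefix of your witness becomes pumpable and it no longer certifies $\mps\ge p_3$. The structural lesson is that any valid $\mps$ witness separating $\mps$ from $\mpl$ must have $u\neq\lambda$: the non-pumpable sub-word has to be buried strictly after a pumpable prefix. (Two smaller problems: your path $q_0,\dots,q_{p_1-1},s_1,\dots,s_{p_2-p_1},m_1,\dots,m_{p_3-p_2-1}$ has $p_3-1$ states, not $p_3$, so the word has length $p_3-2$ and would only give $\mps\ge p_3-1$ even if the argument were otherwise sound; and ``$p_3$ pairwise distinct states, hence no pump'' is not a valid inference --- non-pumpability requires checking, for every decomposition, that some $t$ throws $uxy^tzv$ out of $L$, which distinctness of the visited states alone does not guarantee.)

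The paper's construction resolves precisely this tension with an idea absent from your proposal: it makes the letter $b$ move \emph{backwards} along the chain ($q_i\cdot b=q_{i-1}$) while the forward chain over $\{a,c\}$ carries a self-loop at every state. The witness is then $\tilde{w}=(ac)^{\cdots}a^{\cdots}\,b^{p_3-2}e$ with the hard sub-word $w=b^{p_3-2}e$ occurring after the non-empty pumpable prefix $u=(ac)^{\cdots}a^{\cdots}$: the $b$-block walks back down through $p_3-1$ distinct states with no repetition and hence cannot be pumped as a sub-word, yet every prefix of $\tilde{w}$ of length $\ge 2$ over $\{a,c\}$ contains $aa$ or $cc$ and is pumpable, so $\mpl$ stays at $p_2$ (which is enforced separately by a $d$-cycle of length $p_2$ through the initial state). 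You would need to rebuild your construction around some such mechanism --- a sub-structure traversable only from the middle of a word --- before the rest of your case analysis can be carried out.
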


\begin{proof}
	By taking an intense look at the constructions shown in the proof of Theorem~\ref{thm:mpc-leq-mpl-leq-sc-binary} we observe that~$\mpl(L)=\mps(L)$ holds
	for all used languages~$L$.
	Therefore we safely assume for the rest of the proof that~$p_2<p_3$.
	On the other hand we distinguish for the proof whether~$p_3\le p_4-1$ or~$p_3=p_4$.
	In the former case we additionally differ between~$p_1=1$ or~$p_1\ge 2$.
	Since the constructions in all cases are adaptions of the case~$p_3\le p_4-1$ and~$p_1\ge 2$ we give all constructions next.
	
	We define the automaton
	$A=(\{q_0,q_1,\dots,q_{p_4-1}\},\{a,b,c,d,e\},{}\cdot{},q_0,\{q_{p_1-1}\}\cup F)$
	with the state set
        $F=\{\,q_i\mid p_3\le i\le p_4-2\,\}$, if~$p_1=1$, and $F= 
		\{\,q_i\mid p_3-1\le i\le p_4-2\,\}$, otherwise.
	%$$F=\begin{cases}
	%	\{\,q_i\mid p_3\le i\le p_4-2\,\}, & \mbox{if~$p_1=1$},\\
	%	\{\,q_i\mid p_3-1\le i\le p_4-2\,\}, & \mbox{otherwise.}
	%\end{cases}
	%$$
	The transition function of~$A$ depends on the
	relation of~$p_3$ and~$p_4$.
	For~$p_3=p_4$ we set
	\begin{align*}
		q_{2i}\cdot a & =q_{2i+1}, & \mbox{for~$0\le i\le (p_3-2)\div 2$,}\\
		q_{2i+1}\cdot c & =q_{2i+2}, & \mbox{for~$0\le i\le (p_3-3)\div 2$,}\\
		q_i\cdot b & =q_{i-1}, & \mbox{for~$1\le i\le p_3-1$,}\\
		q_i\cdot d & =q_{i+1\bmod p_2}, &\mbox{for~$0\le i\le p_2-1$.}
	\end{align*}
	On the other hand we set for~$p_3\le p_4-1$ and~$p_1\ge 2$,
	\begin{align*}
		q_{2i}\cdot a & =q_{2i+1}, & \mbox{for~$0\le i\le (p_3-3)\div 2$,}\\
		q_{2i+1}\cdot a & =q_{2i+1}, & \mbox{for~$0\le i\le (p_3-3)\div 2$,}\\
		q_i\cdot b & =q_{i-1}, & \mbox{for~$1\le i\le p_3-2$,}\\
		q_{2i-1}\cdot c & =q_{2i}, & \mbox{for~$1\le i\le (p_3-2)\div 2$,}\\
		q_{2i}\cdot c & =q_{2i}, & \mbox{for~$0\le i\le (p_3-2)\div 2$,}\\
		q_{p_3+2i-1}\cdot c & =q_{p_3+2i}, &\mbox{for~$0\le i\le (p_4-p_3-1)/2-1$,}\\
		q_{p_3+2i}\cdot c & =q_{p_3+2i}, &\mbox{for~$0\le i\le (p_4-p_3-1)/2-1$,}\\
		q_i\cdot d & =q_{i+1\bmod p_2}, &\mbox{for~$0\le i\le p_2-1$,}\\
		q_0\cdot e & =q_{p_3-1}, &\\
		q_{p_3+2i-1}\cdot e & =q_{p_3+2i-1}, &\mbox{for~$0\le i\le (p_4-p_3-1)/2-1$,}\\
		q_{p_3+2i}\cdot e & =q_{p_3+2i+1}, &\mbox{for~$0\le i\le (p_4-p_3-1)/2-1$.}
	\end{align*}
	For~$p_3\le p_4-1$ and~$p_1=1$ we elongate the chain of states which are reachable by applying words from~$\{a,c\}^*$ to~$q_0$ by setting
		\begin{align*}
		q_{2i}\cdot a & =q_{2i+1}, & \mbox{for~$0\le i\le (p_3-2)\div 2$}\\
		q_{2i+1}\cdot a & =q_{2i+1}, & \mbox{for~$0\le i\le (p_3-2)\div 2$,}\\
		q_i\cdot b & =q_{i-1}, & \mbox{for~$1\le i\le p_3-1$}\\
		q_{2i-1}\cdot c & =q_{2i}, & \mbox{for~$1\le i\le (p_3-1)\div 2$,}\\
		q_{2i}\cdot c & =q_{2i}, & \mbox{for~$0\le 0\le (p_3-1)\div 2$,}\\
		q_{p_3+2i}\cdot c & =q_{p_3+2i+1}, &\mbox{for~$0\le i\le (p_4-p_3-1)/2-1$,}\\
		q_{p_3+2i-1}\cdot c & =q_{p_3+2i-1}, &\mbox{for~$1\le i\le (p_4-p_3-1)/2-1$,}\\
		q_i\cdot d & =q_{i+1\bmod p_2}, &\mbox{for~$0\le i\le p_2-1$,}\\
		q_0\cdot e & =q_{p_3}, &\\
		q_{p_3+2i}\cdot e & =q_{p_3+2i}, &\mbox{for~$0\le i\le (p_4-p_3-1)/2-1$,}\\
		q_{p_3+2i-1}\cdot e & =q_{p_3+2i}, &\mbox{for~$1\le i\le (p_4-p_3-1)/2-1$.}
	\end{align*}
	
	Additionally to the previously explicitly given transitions we set all other
	transitions to be transitions to the non-accepting sink state~$q_{p_4-1}$ for~$p_3\le p_4-1$ and for~$p_3=p_4$ we set them to be self-loops.
    The automaton~$A$ is depicted in Figure~\ref{fig:overall4control} for the case~$p_3\le p_4-1$, $p_1\ge 2$ (on top), if~$p_3\le p_4-1$, $p_1=1$ (in the middle) and for the case~$p_3=p_4$ (on the bottom).
	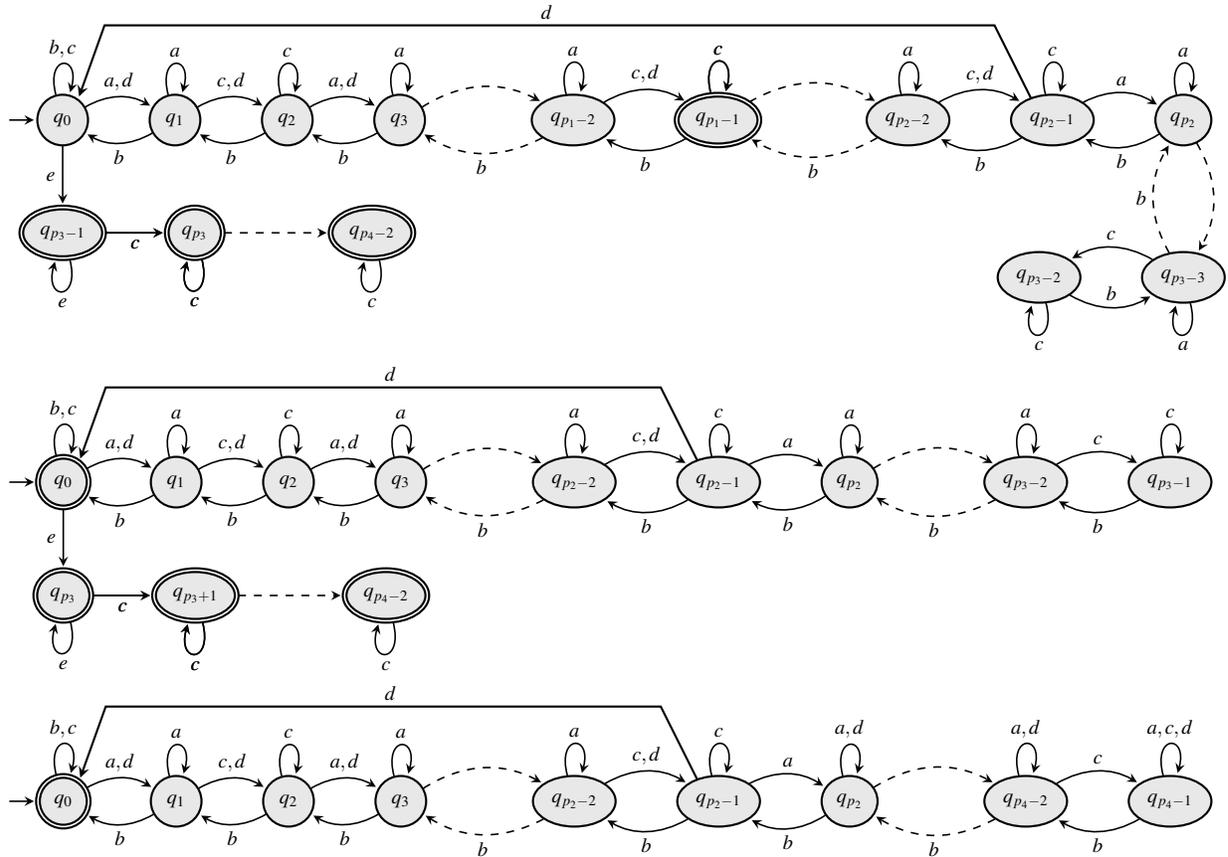
\begin{figure}[!htb]
		\begin{center}
			\begin{tikzpicture}[->,>=stealth,shorten >=1pt,auto,node distance=0.9cm,
	semithick,scale=0.7, every node/.style={scale=0.7},initial text=]
	\tikzstyle{every state}=[fill={rgb:black,1;white,10},ellipse]
	
	\node[initial,state] (p0) [thick]{$q_0$};
	\node[state] (p1) [right = 0.8cm of p0] [thick]{$q_1$};
	\node[state] (p2) [right = 0.8cm of p1] [thick]{$q_2$};
	\node[state] (p3) [right = 0.8cm of p2] [thick]{$q_3$};
	\node[state] (p4) [right = 1.4cm of p3] [thick]{$q_{p_1-2}$};
	\node[state,accepting] (p5) [right = 0.8cm of p4] [thick]{$q_{p_1-1}$};
	
	\node[state] (p6) [right = 1.4cm of p5] [thick]{$q_{p_2-2}$};
	\node[state] (p7) [right = 0.8cm of p6] [thick]{$q_{p_2-1}$};
	\node[state] (p8) [right = 0.8cm of p7] [thick]{$q_{p_2}$};
	
	\node[state] (p9) [below = 1.4cm of p8] [thick]{$q_{p_3-3}$};
	\node[state] (p10) [left = 0.8cm of p9] [thick]{$q_{p_3-2}$};
	
	\node[state,accepting] (p11) [below = 0.8cm of p0] [thick]{$q_{p_3-1}$};
	\node[state,accepting] (p12) [right = 0.8cm of p11] [thick]{$q_{p_3}$};
	\node[state,accepting] (p13) [right = 1.4cm of p12] [thick]{$q_{p_4-2}$};
	
	\path

	(p0) edge[bend left] node[midway,above]{$a,d$} (p1)
	(p0) edge[loop above] node[midway]{$b,c$} (p0)
	
	(p1) edge[bend left] node[midway,above]{$c,d$} (p2)
	(p1) edge[loop above] node[midway]{$a$} (p1)
	(p1) edge[bend left] node[midway]{$b$} (p0)
	
	(p2) edge[bend left] node[midway,above]{$a,d$} (p3)
	(p2) edge[loop above] node[midway]{$c$} (p2)
	(p2) edge[bend left] node[midway]{$b$} (p1)
	
	(p3) edge[bend left,dashed] node[midway,above]{} (p4)
	(p3) edge[loop above] node[midway]{$a$} (p3)
	(p3) edge[bend left] node[midway]{$b$} (p2)
	
	(p4) edge[bend left] node[midway,above]{$c,d$} (p5)
	(p4) edge[loop above] node[midway]{$a$} (p4)
	(p4) edge[bend left,dashed] node[midway]{$b$} (p3)

	(p5) edge[loop above] node[midway]{$c$} (p4)
	(p5) edge[bend left] node[midway]{$b$} (p4)
	(p5) edge[bend left,dashed] node[midway]{} (p6)
	
	(p6) edge[bend left] node[midway,above]{$c,d$} (p7)
	(p6) edge[loop above] node[midway]{$a$} (p6)
	(p6) edge[bend left,dashed] node[midway]{$b$} (p5)
	
	(p7) edge[bend left] node[midway,above]{$a$} (p8)
	(p7) edge[loop above] node[midway]{$c$} (p7)
	(p7) edge[bend left] node[midway]{$b$} (p6)
	
	(p8) edge[loop above] node[midway]{$a$} (p8)
	(p8) edge[bend left] node[midway]{$b$} (p7)
	(p8) edge[bend left,dashed] node[midway,right]{} (p9)
	
	(p9) edge[loop below] node[midway]{$a$} (p9)
	(p9) edge[bend left,dashed] node[midway]{$b$} (p8)
	(p9) edge[bend right] node[midway,above]{$c$} (p10)
	
	(p10) edge[loop below] node[midway]{$c$} (p10)
	(p10) edge[bend right] node[midway]{$b$} (p9)

	(p0) edge[] node[midway,left]{$e$} (p11)
	(p11) edge[loop below] node[midway]{$e$} (p11)
	(p11) edge[] node[midway, below]{$c$} (p12)
	(p12) edge[loop below] node[midway]{$c$} (p12)
	
	(p11) edge[] node[midway, below]{$c$} (p12)
	(p12) edge[loop below] node[midway]{$c$} (p12)
	
	(p12) edge[dashed] node[midway]{} (p13)
	(p5) edge[loop above] node[midway]{$c$} (p4)
	(p13) edge[loop below] node[midway]{$c$} (p13)

%	(p7) edge[bend right=20] node[midway,above]{$d$} (p0)
	
	;
	\draw[thick,sloped] ($(p7.west)+(0.4,0.4)$)--($(p7.west)+(-0.3,1.8)$)
	--($(p0.east)+(0.3,1.8)$)--($(p0.east)+(-0.2,0.4)$)
	;
	
	\node[] (d) at ($(p4.west)+(0.3,2.06)$) [minimum size=0.1cm] [thick]{$d$};		
	
	%\draw[thick] ($(p7.west)+(0.4,0.4)$) .. controls +(up:3cm) and +(right:-10.5cm) .. node[above,sloped] {label} ($(p0.east)+(-0.2,0.4)$);

\end{tikzpicture}	
			\begin{tikzpicture}[->,>=stealth,shorten >=1pt,auto,node distance=0.9cm,
	semithick,scale=0.7, every node/.style={scale=0.7},initial text=]
	\tikzstyle{every state}=[fill={rgb:black,1;white,10},ellipse]
	
	\node[initial,state,accepting] (p0) [thick]{$q_0$};
	\node[state] (p1) [right = 0.8cm of p0] [thick]{$q_1$};
	\node[state] (p2) [right = 0.8cm of p1] [thick]{$q_2$};
	\node[state] (p3) [right = 0.8cm of p2] [thick]{$q_3$};
	%\node[state] (p4) [right = 1.4cm of p3] [thick]{$q_{p_1-2}$};
	%\node[state] (p5) [right = 0.8cm of p4] [thick]{$q_{p_1-1}$};
	
	\node[state] (p6) [right = 1.4cm of p3] [thick]{$q_{p_2-2}$};
	\node[state] (p7) [right = 0.8cm of p6] [thick]{$q_{p_2-1}$};
	\node[state] (p8) [right = 0.8cm of p7] [thick]{$q_{p_2}$};
	
	\node[state] (p9) [right = 1.4cm of p8] [thick]{$q_{p_3-2}$};
	\node[state] (p10) [right = 0.8cm of p9] [thick]{$q_{p_3-1}$};
	
	\node[state,accepting] (p11) [below = 0.8cm of p0] [thick]{$q_{p_3}$};
	\node[state,accepting] (p12) [right = 0.8cm of p11] [thick]{$q_{p_3+1}$};
	\node[state,accepting] (p13) [right = 1.4cm of p12] [thick]{$q_{p_4-2}$};
	
	\path

	(p0) edge[bend left] node[midway,above]{$a,d$} (p1)
	(p0) edge[loop above] node[midway]{$b,c$} (p0)
	
	(p1) edge[bend left] node[midway,above]{$c,d$} (p2)
	(p1) edge[loop above] node[midway]{$a$} (p1)
	(p1) edge[bend left] node[midway]{$b$} (p0)
	
	(p2) edge[bend left] node[midway,above]{$a,d$} (p3)
	(p2) edge[loop above] node[midway]{$c$} (p2)
	(p2) edge[bend left] node[midway]{$b$} (p1)
	
	(p3) edge[bend left,dashed] node[midway,above]{} (p6)
	(p3) edge[loop above] node[midway]{$a$} (p3)
	(p3) edge[bend left] node[midway]{$b$} (p2)
	
%	(p4) edge[bend left] node[midway,above]{$c,d$} (p5)
%	(p4) edge[loop above] node[midway]{$a$} (p4)
%	(p4) edge[bend left,dashed] node[midway]{$b$} (p3)
%	
%	
%	(p5) edge[loop above] node[midway]{$c$} (p4)
%	(p5) edge[bend left] node[midway]{$b$} (p4)
%	(p5) edge[bend left,dashed] node[midway]{} (p6)
	
	(p6) edge[bend left] node[midway,above]{$c,d$} (p7)
	(p6) edge[loop above] node[midway]{$a$} (p6)
	(p6) edge[bend left,dashed] node[midway]{$b$} (p3)
	
	(p7) edge[bend left] node[midway,above]{$a$} (p8)
	(p7) edge[loop above] node[midway]{$c$} (p7)
	(p7) edge[bend left] node[midway]{$b$} (p6)
	
	(p8) edge[loop above] node[midway]{$a$} (p8)
	(p8) edge[bend left] node[midway]{$b$} (p7)
	(p8) edge[bend left,dashed] node[midway,right]{} (p9)
	
	(p9) edge[loop above] node[midway]{$a$} (p9)
	(p9) edge[bend left,dashed] node[midway]{$b$} (p8)
	(p9) edge[bend left] node[midway,above]{$c$} (p10)
	
	(p10) edge[loop above] node[midway]{$c$} (p10)
	(p10) edge[bend left] node[midway]{$b$} (p9)

	(p0) edge[] node[midway,left]{$e$} (p11)
	(p11) edge[loop below] node[midway]{$e$} (p11)
	(p11) edge[] node[midway, below]{$c$} (p12)
	(p12) edge[loop below] node[midway]{$c$} (p12)
	
	(p11) edge[] node[midway, below]{$c$} (p12)
	(p12) edge[loop below] node[midway]{$c$} (p12)
	
	(p12) edge[dashed] node[midway]{} (p13)
%	(p5) edge[loop above] node[midway]{$c$} (p4)
	(p13) edge[loop below] node[midway]{$c$} (p13)

%	(p7) edge[bend right=20] node[midway,above]{$d$} (p0)
	
	;
	\draw[thick,sloped] ($(p7.west)+(0.4,0.4)$)--($(p7.west)+(-0.3,1.8)$)
	--($(p0.east)+(0.3,1.8)$)--($(p0.east)+(-0.2,0.4)$)
	;
	
	\node[] (d) at ($(p3.west)+(0.3,2.06)$) [minimum size=0.1cm] [thick]{$d$};		
	
	%\draw[thick] ($(p7.west)+(0.4,0.4)$) .. controls +(up:3cm) and +(right:-10.5cm) .. node[above,sloped] {label} ($(p0.east)+(-0.2,0.4)$);

\end{tikzpicture}	
			\begin{tikzpicture}[->,>=stealth,shorten >=1pt,auto,node distance=0.9cm,
	semithick,scale=0.7, every node/.style={scale=0.7},initial text=]
	\tikzstyle{every state}=[fill={rgb:black,1;white,10},ellipse]
	
	\node[initial,state,accepting] (p0) [thick]{$q_0$};
	\node[state] (p1) [right = 0.8cm of p0] [thick]{$q_1$};
	\node[state] (p2) [right = 0.8cm of p1] [thick]{$q_2$};
	\node[state] (p3) [right = 0.8cm of p2] [thick]{$q_3$};
	%\node[state] (p4) [right = 1.4cm of p3] [thick]{$q_{p_1-2}$};
	%\node[state] (p5) [right = 0.8cm of p4] [thick]{$q_{p_1-1}$};
	
	\node[state] (p6) [right = 1.4cm of p3] [thick]{$q_{p_2-2}$};
	\node[state] (p7) [right = 0.8cm of p6] [thick]{$q_{p_2-1}$};
	\node[state] (p8) [right = 0.8cm of p7] [thick]{$q_{p_2}$};
	
	\node[state] (p9) [right = 1.4cm of p8] [thick]{$q_{p_4-2}$};
	\node[state] (p10) [right = 0.8cm of p9] [thick]{$q_{p_4-1}$};
		
	\path

	(p0) edge[bend left] node[midway,above]{$a,d$} (p1)
	(p0) edge[loop above] node[midway]{$b,c$} (p0)
	
	(p1) edge[bend left] node[midway,above]{$c,d$} (p2)
	(p1) edge[loop above] node[midway]{$a$} (p1)
	(p1) edge[bend left] node[midway]{$b$} (p0)
	
	(p2) edge[bend left] node[midway,above]{$a,d$} (p3)
	(p2) edge[loop above] node[midway]{$c$} (p2)
	(p2) edge[bend left] node[midway]{$b$} (p1)
	
	(p3) edge[bend left,dashed] node[midway,above]{} (p6)
	(p3) edge[loop above] node[midway]{$a$} (p3)
	(p3) edge[bend left] node[midway]{$b$} (p2)
	
%	(p4) edge[bend left] node[midway,above]{$c,d$} (p5)
%	(p4) edge[loop above] node[midway]{$a$} (p4)
%	(p4) edge[bend left,dashed] node[midway]{$b$} (p3)
%	
%	
%	(p5) edge[loop above] node[midway]{$c$} (p4)
%	(p5) edge[bend left] node[midway]{$b$} (p4)
%	(p5) edge[bend left,dashed] node[midway]{} (p6)
	
	(p6) edge[bend left] node[midway,above]{$c,d$} (p7)
	(p6) edge[loop above] node[midway]{$a$} (p6)
	(p6) edge[bend left,dashed] node[midway]{$b$} (p3)
	
	(p7) edge[bend left] node[midway,above]{$a$} (p8)
	(p7) edge[loop above] node[midway]{$c$} (p7)
	(p7) edge[bend left] node[midway]{$b$} (p6)
	
	(p8) edge[loop above] node[midway]{$a,d$} (p8)
	(p8) edge[bend left] node[midway]{$b$} (p7)
	(p8) edge[bend left,dashed] node[midway,right]{} (p9)
	
	(p9) edge[loop above] node[midway]{$a,d$} (p9)
	(p9) edge[bend left,dashed] node[midway]{$b$} (p8)
	(p9) edge[bend left] node[midway,above]{$c$} (p10)
	
	(p10) edge[loop above] node[midway]{$a,c,d$} (p10)
	(p10) edge[bend left] node[midway]{$b$} (p9)

%	(p7) edge[bend right=20] node[midway,above]{$d$} (p0)
	
	;
	\draw[thick,sloped] ($(p7.west)+(0.4,0.4)$)--($(p7.west)+(-0.3,1.8)$)
	--($(p0.east)+(0.3,1.8)$)--($(p0.east)+(-0.2,0.4)$)
	;
	
	\node[] (d) at ($(p3.west)+(0.3,2.06)$) [minimum size=0.1cm] [thick]{$d$};		
	
	%\draw[thick] ($(p7.west)+(0.4,0.4)$) .. controls +(up:3cm) and +(right:-10.5cm) .. node[above,sloped] {label} ($(p0.east)+(-0.2,0.4)$);

\end{tikzpicture}	
		\end{center}
		\caption{The automaton~$A$ for the case~$p_3\le p_4-1$, $p_1\ge 2$ (on top), if~$p_3\le p_4-1$, $p_1=1$ (in the middle) and for the case~$p_3=p_4$ (on the bottom). For the first two cases the state~$q_{p_4-1}$ is a non-accepting sink state and all not shown transitions are mappings onto~$q_{p_4-1}$. 
		In the case~$p_3=p_4$ the letter~$e$ is not needed. 
		Recall, that the~$a$-, $c$-, and~$e$-transitions in all cases depend on the parity of~$p_3-2$ and~$p_4-2$, \resp.}
		\label{fig:overall4control}
	\end{figure}
	We will use small claims for making it easier to prove that the language~$L:=L(A)$ fulfills
	the requested properties.

	\begin{clm}
		The automaton~$A$ is minimal.
	\end{clm}
	
	\begin{proof}
		We observe that for all states in~$S_1:=\{q_0,$ $q_1$, $\dots,$ $q_{p_1-2}\}$ there is a unique shortest word in~$\{a,c\}^*$ mapping the state onto~$q_{p_1-1}$. 
		The analogue is true for the states in~$S_2:=\{q_{p_1-1},$ $q_{p_1}$, $\dots,$ $q_{p_3-2}\}$
		and the set~$\{b\}^*$.
		Therefore the above mentioned states cannot contain a pair of equivalent states.
		Additionally for all states~$S_3:=\{q_{p_3},$ $q_{p_3+1}$, $\dots,$ $q_{p_4-2},q_0\}$ there
		is a unique shortest word in~$\{c,e\}^*$ mapping the state onto the state~$q_{p_4-2}$ which implies~$S_3$ cannot contain equivalent states.
		Since~$S_1\cdot b^{p_1}=\{q_0\}$ and~$S_3\cdot b^{p_1}=\{q_{p_4-1}\}$ we obtain
		that there are no states in~$S_1\cup S_2\cup S_3\cup \{q_{p_4-1}\}$ which are
		equivalent. Indeed this directly implies that~$A$ is minimal.
	\end{proof}

	\begin{clm}
		We have~$\mpl(L)=p_2$.
	\end{clm}
	
	\begin{proof}
		Due to the fact that~$L\cap \{d\}^*=(\{d\}^{p_2})^*\{d\}^{p_1-1}$ we have that
		the word~$d^{p_2+p_1-1}$ is only pumpable by the sub-word~$d^{p_2}$ and no shorter sub-word. Indeed this implies that~$\mpl(L)\ge p_2$.
		We will show that each word~$\tilde{w}\in L$ of length at least~$p_1$ is pumpable by a sub-word
		of its~$p_2$-prefix.
		Therefore we distinguish between the several beginnings of~$\tilde{w}$:
		\begin{itemize}
			\item The first letter of~$\tilde{w}$ is an~$a$ or a~$d$. 
			Here we observe that either~$\tilde{w}$ contains one of the words~$ab$, $cb$, $db$, $aa$ or $cc$ in its~$p_2$-prefix or its~$p_2$-prefix~$w_1$ is from~$\{a,c,d\}^{p_2}$ such that~$q_0\cdot w_1=q_0$.
			
			If~$\tilde{w}$ contains one of the words~$ab$, $cb$, $db$, $aa$ or $cc$ in its~$p_2$-prefix then $\tilde{w}$ can be pumped by the sub-words~$ab$, $cb$, $db$, $a$ and $c$, \resp.
			
			If~$\tilde{w}$ has a~$p_2$-prefix~$w_1$ which is from~$\{a,b,c\}^{p_2}$ such that~$q_0\cdot w_1=q_0$
			then we can pump~$\tilde{w}$ by~$w_1$ since~$q_0\cdot w_1^i=q_0$ for all~$i\ge 0$.
			
			\item The word~$\tilde{w}$ starts with the letter~$b$ or~$c$.
			It is obvious that~$\tilde{w}$ is pumpable by its first letter.
			
			\item If the word~$\tilde{w}$ has~$e$ as its first letter we observe that~$\tilde{w}\in\{e,c\}^*$.
			For~$p_2=1$ we can pump~$\tilde{w}$ by its first letter since~$q_0\cdot c=q_0$ and~$q_0\cdot e^i=q_{p_3-1}$ for all~$i\ge 1$, which are both accepting states.
			For~$p_2\ge 2$ we can pump~$\tilde{w}$ by its second letter since~$q_{p_3}\cdot c^i=q_{p_3}$ and~$q_{p_3-1}\cdot e^i=q_{p_3-1}$ for
			all~$i\ge 0$.
		\end{itemize}

	\end{proof}

	\begin{clm}
		We have~$\mpc(L)=p_1$.
	\end{clm}
	
	\begin{proof}
		Since we have shown that each word of length at least~$p_1$ is pumpable by its~$p_2-$prefix it remains to observe that the word~$d^{p_1-1}$ is not pumpable since~$L\cap \{d\}^*=(\{d\}^{p_2})^*\{d\}^{p_1-1}$.
	\end{proof}

	\begin{clm}
		We have~$\mps(L)=p_3$.
	\end{clm}

	\begin{proof}
		Observe that for~$p_1=1$ and~$p_3\le p_4-1$ the chain of non-sink states which are reachable from
		the initial state in~$A$ by applying a word in~$\{a,c\}^*$ is exactly one state
		longer as for~$p_1\ge 2$ and~$p_3\le p_4-1$.
		Therefore we have that~$\tilde{w}=(ac)^{(p_3-2\div 2)}a^{p_3-2\bmod 2}b^{p_3-2}e$ is not pumpable by any sub-word of~$w=b^{p_3-2}e$ for~$p_1\ge 2$
		and~$\tilde{w}=(ac)^{(p_3-1\div 2)}a^{p_3-1\bmod 2}b^{p_3-1}$ is not pumpable
		by any sub-word of~$w=b^{p_3-1}$ for~$p_1=1$ which implies
		that~$\mps(L)\ge p_3$.
		We now distinguish between all possible words~$w\in\Sigma^*$ with~$|w|=p_3$
		and the words~$\tilde{w}$ which can contain them to give a sub-word~$y$ of~$w$
		such that~$\tilde{w}$ is pumpable by~$y$:
		\begin{itemize}
			\item If~$w$ contains~$aa$ or $cc$ then~$\tilde{w}$ can be pumped by~$y=a$ and~$c$, \resp.
			
			\item In the case~$w$ contains a sub-word in~$\{\,xb\mid x\in\{a,c,d\}\,\}$ then we can pump~$\tilde{w}$
			by~$y=x$ if~$x$ induces a self-loop for the according state
			or by~$y=b$ if~$b$ from~$xb$ induces a self-loop on the according state
			or by~$xb$ otherwise.
			The last way of pumping is possible since~$xb$ induces a self-loop on the according state.
			
			\item The case that~$w$ contains a sub-word from~$\{bx\mid x\in\{a,c,d\}\}$
			can be treated similarly as above.
			
			\item If~$w$ contains a sub-word~$y$ from~$\{a,c,d\}^*$ with length~$p_2$
			such that~$\tilde{w}=uxyzv$ and~$w=xyz$ for words~$u,x,z,v\in\Sigma^*$,
			$q_0\cdot ux\in\{q_0,q_1,\dots,q_{p_2}\}$ and~$q_0\cdot uxy=q_0\cdot ux$.
			Clearly~$\tilde{w}$ can be pumped by~$y$.
			
			\item The word~$w$ contains the letter~$b=y$ such that~$\tilde{w}=uxyzv$ and~$w=xyz$ for words~$u,x,z,v\in\Sigma^*$, and~$q_0\cdot ux=q_0$.
			Then~$\tilde{w}$ can be pumped by~$y=b$ because~$q_0\cdot uxy=q_0\cdot y^i=q_0\cdot b^i=q_0$ for all~$i\ge 0$.
			
			\item If the word~$w$ contains the sub-word~$ec$ or~$ee$ then we can pump~$\tilde{w}$ by~$y=c$ or~$y=e$, \resp.
			
		\end{itemize}
		It remains to observe that~$w$ has to contain one of the previously mentioned
		sub-words.
		Therefore we study how long the longest prefix~$w'$ of~$w$ in~$\{a,b,c,d\}^*$ can be such that none of the above-mentioned sub-words are contained.
		Afterwards we elongate this prefix by a word in~$\Sigma^*$.

		First one may understand that for any given state~$q$ of~$A$ the longest word~$w'$ in~$\{a,b,c,d\}^*$, that cannot be decomposed into~$w'=xyz$ for  words~$x$, $y$, $z\in\Sigma^*$ such that~$|y|\ge 1$ and~$q\cdot x=q\cdot xy$, has length at most~$p_3-1$ for~$p_3=p_4$ and length at most~$p_3-2$ for~$p_3\le p_4-1$.
		Roughly speaking this can be seen by observing that the longest such word has to map the state~$q$ onto each of the states~$q_0,q_1,\dots,q_{p_3-1}$ for~$p_3=p_4$ and onto each of the states~$q_0,q_1,\dots,q_{p_3-2}$ for~$p_3\le p_4-1$.
		The only possibilities to elongate such a word~$w'$ are to either violate the 
		previously described decomposing property or to elongate~$w'$ by the letter~$e$.
		Due to the construction of the automaton the word~$w'e$ can only be a sub-word 
		of a word~$\tilde{w}\in L$ iff~$\tilde{w}=uw'ew''v$ for~$q_0\cdot uw'e=q_{p_3}$,~$w'',v\in\Sigma^*$, and~$w=w'ew''$.
		Again the transition mapping of~$A$ implies that~$w''$ is empty or starts with one of the letters~$c$ and $e$.
		Indeed this implies that~$w=w'ew''$ either has length~$|w|=|w'e|\le p_3-1$ for~$w''=\lambda$ or contains one of the sub-words~$ee$ or~$ec$ and is therefore
		pumpable by its~$p_3$-th letter.
		
		One observes that if we choose~$w'$ to be not maximal it similarly 
		that~$w$ either has length less than~$p_3$ or it contains one of the sub-words~$y$ mentioned above such that that~$\tilde{w}$ is pumpable by~$y$.
	\end{proof}
	In conclusion we have that~$\mpc(L)=p_1$,~$\mpl(L)=p_2$,~$\mps(L)=p_3$, and~$\dsc(L)=p_4$ for~$p_3\le p_4-1$.
	Due to Theorem~\ref{thm:loops-only-decrease-mpx} we directly obtain for~$p_3=p_4$ that the according pumping constants have to be at most 
	equal to the pumping constants in the case~$p_3\le p_4-1$.
	In turn we observe that the witnesses for~$\mpc(L)\ge p_1$ and~$\mpl(L)\ge p_2$
	can also applied for~$p_3=p_4$.
	Additionally the word~$\tilde{w}=(ac)^{(p_3-1\div 2)}a^{p_3-1\bmod 2}b^{p_3-1}$
	with~$w=b^{p_3-1}$ witnesses~$\mps(L)\ge p_3$ for~$p_3=p_4$. The minimality of~$A$ can be shown similarly as for~$p_3\le p_4-1$.
	Therefore we conclude that~$\mpc(L)=p_1$,~$\mpl(L)=p_2$,~$\mps(L)=p_3$, and~$\dsc(L)=p_4$.
\end{proof}

\subsection{Operational Complexity of Sub-Word Pumping}

We study the effect of regularity preserving standard formal language
operations on the minimal pumping constant w.r.t.\
Lemma~\ref{lem:pumping-xy-length} and compare them to previously
obtained results~\cite{DaJe22} for the other minimal pumping
constants. To this end we need some notation: let~$\circ$ be a
regularity preserving $n$-ary function on languages and
$K\in\{\mpc,\mpl,\mps\}$. Then, we define
$g_\circ^K(k_1 , k_2 , \ldots, k_n)$ as the set of all numbers~$k$
such that there are regular languages $L_1,L_2,\ldots,L_n$ with
$K(L_i)=k_i$, for $1\leq i\leq n$ and
$K(\circ(L_1, L_2,\ldots, L_n)) = k$.  Results for some regularity
preserving operations on~$\mpc$ and~$\mpl$ can be found in the
comprehensive Table~\ref{tab:results}. The set of all natural numbers
not including zero is denoted by~$\mathbb{N}$; if zero is included,
then we write~$\mathbb{N}_0$ instead. The gray shaded entries in
Table~\ref{tab:results} are new results, left open results, or corrected
results from~\cite{DaJe22}. We only give
the proofs for two of these new results, namely Kleene star and
intersection.

Let us start with the Kleene star operation. In~\cite{DaJe22} it was
shown that for the Kleene star operation the following results hold:
$$
g^{\mpc}_{*}(n)=\{1\}
\quad\mbox{and}\quad 
g^{\mpl}_{*}(n)=
\begin{cases}
	\{1\}, & \mbox{if $n=0$,}\\
	\{1,2,\dots,n\}, & \mbox{otherwise,} 
\end{cases}
$$
for every~$n\ge 0$. 
\newcommand{\mr}[2]{\multirow{#1}{*}{#2}}%
\newcommand{\ccg}{\cellcolor[gray]{0.9}}%
\begin{sidewaystable}
	\begin{center}
		\centering
		\resizebox{\textheight}{!}{%
                \begin{tabular}{lccc}\toprule
                  & \multicolumn{3}{c}{Minimal pumping constant}\\\cmidrule(lr){2-4} 
    Operation  & $\mpc$ & $\mpl$ & $\mps$\\\cmidrule(lr){1-4}
    Kleene star & $\{1\}$ & 
                                            \begin{tabular}{ll}
                                             $\{1\},$ & \mbox{if $n=0$,}\\ 
                                             $\{1,2,\dots,n\},$ & \mbox{otherwise.}
                                            \end{tabular}

    & \ccg
      \begin{tabular}{ll}
        $\{1\},$ & \mbox{if $n=0$,}\\
        $\{1,2,\dots,2n-1\},$ & \mbox{otherwise.}
      \end{tabular}
    \\\cmidrule(lr){1-4}
	Reversal &$\{n\}$ & \begin{tabular}{ll}
		$\{0\},$ & \mbox{if $n=0$,}\\ 
		$\N,$ & \mbox{otherwise.}
	\end{tabular}
 	& $\{n\}$\ccg\\\cmidrule(lr){1-4}
 	Complement &\begin{tabular}{ll}
 	$\{1\},$ & \mbox{if $n=0$,}\\
 	$\N_0\setminus\{1\},$ & \mbox{if $n=1$,}\\
 	$\N,$ & \mbox{otherwise.} 
 		\end{tabular}
 	&
 	\begin{tabular}{ll}
 		$\{1\},$ & \mbox{if $n=0$}\\
 		$\N_0\setminus\{1\},$ & \mbox{if $n=1$,}\\
 		$\N,$ & \mbox{otherwise.} 
 	\end{tabular}
 	
 	 & \begin{tabular}{ll}
 	 	$\{1\},$ & \mbox{if $n=0$,}\\
 	 	$\N_0\setminus\{1\},$ & \mbox{if $n=1$,}\\
 	 	$\N,$ & \mbox{otherwise.} 
 	 \end{tabular}
  	\ccg\\\cmidrule(lr){1-4}
    Prefix-Closure &   \begin{tabular}{ll}
			    	$\{0\},$ & \mbox{if $n=0$,}\\
			    	$\mathbb{N},$ & \mbox{otherwise.}          
			    \end{tabular}
 			 &
                 \begin{tabular}{ll}
                   $\{0\},$ & \mbox{if $n=0$,}\\
                   $\{1,2,\dots,n\},$ & \mbox{otherwise.}          
                 \end{tabular}
             & \ccg
	             \begin{tabular}{ll}
	             	$\{0\},$ & \mbox{if $n=0$,}\\
	             	$\{1,2,\dots,n\},$ & \mbox{otherwise.}          
	             \end{tabular}

              \\\cmidrule(lr){1-4}
    Suffix-Closure &   \begin{tabular}{ll}
    	$\{0\},$ & \mbox{if $n=0$,}\\
    	$\mathbb{N},$ & \mbox{otherwise.}          
    \end{tabular}
    &
    \begin{tabular}{ll}
    	$\{0\},$ & \mbox{if $n=0$,}\\
    	$\{1\},$ & \mbox{if $n=1$,}\\
    	$\N,$ & \mbox{otherwise.}          
    \end{tabular}
    & \ccg
    \begin{tabular}{ll}
    	$\{0\},$ & \mbox{if $n=0$,}\\
    	$\{1,2,\dots,n\},$ & \mbox{otherwise.}          
    \end{tabular}
    
    \\\cmidrule(lr){1-4}
    Union &   \begin{tabular}{ll}
    	$\max\{m,n\},$ & \mbox{if $m=0$ or~$n=0$,}\\
    	$\{1,2,\dots,\max\{m,n\}\},$ & \mbox{otherwise.}          
    \end{tabular}
    &
    \begin{tabular}{ll}
    	$\max\{m,n\},$ & \mbox{if $m=0$ or~$n=0$,}\\
    	$\{1,2,\dots,\max\{m,n\}\},$ & \mbox{otherwise.}          
    \end{tabular}
    & \ccg
    \begin{tabular}{ll}
    	$\max\{m,n\},$ & \mbox{if $m=0$ or~$n=0$,}\\
    	$\{1,2,\dots,\max\{m,n\}\},$ & \mbox{otherwise.}          
    \end{tabular}

	\\\cmidrule(lr){1-4}
    Set-Subtraction &
    \begin{tabular}{ll}
    	$\{0\},$ & \mbox{if $m= 0,n\ge 0$,}\\
    	$\{m\},$ & \mbox{if $m\ge 0,n= 0$,}\\
    	\ccg $\N_0\setminus\{1\},$ & \ccg \mbox{if $m\ge 1,n= 1$,}\\
    	$\N_0,$ & \mbox{otherwise.}          
    \end{tabular}
    &
     \begin{tabular}{ll}
    	$\{0\},$ & \mbox{if $m= 0,n\ge 0$,}\\
    	$\{m\},$ & \mbox{if $m\ge 0,n= 0$,}\\
    	\ccg$\N_0\setminus\{1\},$ & \ccg\mbox{if $m\ge 1,n= 1$,}\\
    	$\N_0,$ & \mbox{otherwise.}          
    \end{tabular}

    &  \ccg  \begin{tabular}{ll}
    	$\{0\},$ & \mbox{if $m= 0,n\ge 0$,}\\
    	$\{m\},$ & \mbox{if $m\ge 0,n= 0$,}\\
    	$\N_0\setminus\{1\},$ & \ccg\mbox{if $m\ge 1,n= 1$,}\\
    	$\N_0,$ & \mbox{otherwise.}          
    \end{tabular}\\
    \cmidrule(lr){1-4}
    Concatenation &
    	\begin{tabular}{ll}
    		$\{0\}$, & \mbox{if $m=0$ or~$n= 0$,}\\
    		$\{1,2,\ldots,m+n-1\}$, & \mbox{otherwise.}    
    	\end{tabular}
    &
    \begin{tabular}{ll}
    	$\{0\}$, & \mbox{if $m=0$ or~$n= 0$,}\\
    	$\{1,2,\ldots,m+n-1\}$, & \mbox{otherwise.}    
    \end{tabular}
    &	\ccg
    \begin{tabular}{ll}
    	$\{0\}$, & \mbox{if $m=0$ or~$n= 0$,}\\
    	$\{1,2,\ldots,m+n-1\}$, & \mbox{otherwise.}    
    \end{tabular}\\
    \cmidrule(lr){1-4}
    Intersection &
    \begin{tabular}{ll}
    $\{0\}$, & \mbox{if $m=0$ or $n= 0$,}\\
    $\N_0\setminus\{2\}$, & \mbox{if $m=n=1$,}\\
    $\N_0$, & \mbox{otherwise.}
	\end{tabular}
    
     & \ccg
     \begin{tabular}{ll}
     $\{0\}$, & \mbox{if $m=0$ or $n= 0$,}\\
     $\{1\}$, & \mbox{if $m=n=1$,}\\
     $\N_0$, & \mbox{otherwise.}
    \end{tabular}
     &\ccg
     \begin{tabular}{ll}
     	$\{0\}$, & \mbox{if $m=0$ or $n= 0$,}\\
      	$\{1\}$, & \mbox{if $m=n=1$,}\\
      	$\N_0$, & \mbox{otherwise.}
     \end{tabular}
      \\
       \cmidrule(lr){1-4}
      Symmetric Difference &
      \begin{tabular}{ll}
      	$\max\{m,n\}$, & \mbox{if $m=0$ or~$n=0$,}\\
      	$\N_0\setminus\{1\}$, & \mbox{if $m=n=1$,}\\
      	$\N_0$, & \mbox{if $m=n>1$,}\\
      	$\N$, & \mbox{otherwise.}  
      \end{tabular}
      
      & 
      \begin{tabular}{ll}
      	$\max\{m,n\}$, & \mbox{if $m=0$ or~$n=0$,}\\
      	\ccg $\N_0\setminus\{1\}$, & \ccg\mbox{if $m=n=1$,}\\
      	\ccg $\N_0$, & \ccg\mbox{if $m=n>1$,}\\
      	\ccg $\N$, & \ccg\mbox{otherwise.}  
      \end{tabular}
      &\ccg
      
      \begin{tabular}{ll}
      	$\max\{m,n\}$, & \mbox{if $m=0$ or~$n=0$,}\\
      	\ccg $\N_0\setminus\{1\}$, & \ccg\mbox{if $m=n=1$,}\\
      	\ccg $\N_0$, & \ccg\mbox{if $m=n>1$,}\\
      	\ccg $\N$, & \ccg\mbox{otherwise.}  
      \end{tabular}      
      \\

    \bottomrule
  \end{tabular}}
  \caption{Results on the operational complexity of the minimal pumping constants~$\mpc$,\ $\mpl$, and $\mps$. The results for the former two minimal pumping constants are from~\cite{DaJe22}. Gray shaded entries indicate new results, previous left open results, or corrected ones. Here~$\mathbb{N}$ refers to the set of all natural number \emph{not} including zero; if zero is included we refer to this set as~$\mathbb{N}_0$.}
  \label{tab:results}

\end{center}
\end{sidewaystable}
For the minimal pumping constant~\mps\ a larger set of numbers is attainable
as we show next.

\begin{thm}\label{thm:mps-star}
	It holds
	$$g^{\mps}_{*}(n)=
	\begin{cases}
		\{1\}, & \mbox{if $n=0$,}\\
		\{1,2,\dots,2n-1\}, & \mbox{otherwise.}
	\end{cases}
	$$
\end{thm}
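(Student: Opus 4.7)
For the base case $n=0$ the only candidate is $L=\emptyset$, forcing $L^{*}=\{\lambda\}$ and $\mps(L^{*})=1$; this gives the singleton $\{1\}$.

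For $n\ge 1$ the plan splits into the upper bound $\mps(L^{*})\le 2n-1$ and the achievability of every value $k\in\{1,\dots,2n-1\}$. To prove the upper bound I would take any $L$ with $\mps(L)=n$ and any $\tilde w = uwv\in L^{*}$ with $|w|\ge 2n-1$, fix a factorisation $\tilde w = w_{1}w_{2}\cdots w_{k}$ with $w_{i}\in L\setminus\{\lambda\}$ (discarding empty factors), and examine the prefix $w'$ of $w$ of length $2n-1$. Let $w_{i},\dots,w_{j}$ be the $L$-factors meeting $w'$. If some $w_{t}$ shares at least $n$ consecutive characters with $w'$, I apply sub-word pumping inside $w_{t}\in L$ to that overlap; a short position count, splitting on whether $w_{t}$ starts before or inside $w'$, shows that the resulting pumpable $y$ satisfies $|xy|\le 2n-1$ within $w$. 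Otherwise every intersection has length at most $n-1$, and since $\lceil(2n-1)/(n-1)\rceil\ge 3$ for $n\ge 2$, at least three factors meet $w'$, so $w_{i+1}$ lies fully inside $w'$ with $1\le|w_{i+1}|\le n-1$. Taking $y:=w_{i+1}$ then inserts or removes complete $L$-factors, keeping $\tilde w$ in $L^{*}$, and $|xy|\le 2n-2$. The case $n=1$ is automatic under the first alternative.

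For achievability I need a witness $L$ with $\mps(L)=n$ for each target value $k\in\{1,\dots,2n-1\}$. Three natural anchors are available: $L=\Sigma^{\le n-1}\setminus\{\lambda\}$ gives $\mps(L^{*})=1$ because $L^{*}=\Sigma^{*}$; the unary $L=(a^{n-1})^{+}$ has $L^{*}=(a^{n-1})^{*}$, which yields $\mps(L^{*})=n-1$; and the finite $L=\{a^{n-2}b\}$ gives $\mps(L^{*})=2n-3$ as every sub-word of length $2n-3$ in $(a^{n-2}b)^{*}$ contains a full period $a^{n-2}b$. For the remaining intermediate values and the two extrema $2n-2,2n-1$ I would build templates that combine such building blocks over disjoint alphabet copies, invoking Theorem~\ref{thm:loops-only-decrease-mpx} to depress individual pumping constants where needed, and use infinite periodic languages whose stars push the only pumpable unit to a position just beyond a carefully designed $(2n-2)$-character window.

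The upper bound is the clean half; the main obstacle is verifying the lower-bound constructions for each intermediate $k$, since simultaneously pinning $\mps(L)=n$ and $\mps(L^{*})=k$ requires careful interplay between the $L$-factor lengths and the pumping window in $L^{*}$, and each range of $k$ will likely demand its own parametric template.
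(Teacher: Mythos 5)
Your treatment of $n=0$ and of the upper bound $\mps(L^{*})\le 2n-1$ is sound and essentially the argument the paper uses: decompose $\tilde w$ into non-empty $L$-factors and observe that a window of length $2n-1$ must either overlap some factor in $n$ consecutive positions (then pump inside that factor using $\mps(L)=n$) or fully contain a factor of length at most $n-1$ (then pump that whole factor). The position bookkeeping you sketch does go through. The problem is the reachability half, which is where almost all of the work lies, and there your proposal has both a concrete error and a large gap. The error: the claim that $L=\{a^{n-2}b\}$ yields $\mps(L^{*})=2n-3$ is false. The minimal \dfa\ for $(a^{n-2}b)^{*}$ has $n$ states (an $(n-1)$-cycle plus a sink), so Theorem~\ref{thm:mps-dsc} forces $\mps((a^{n-2}b)^{*})\le n<2n-3$ for $n\ge 4$. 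The flaw in the reasoning is that pumpability does not require a window to contain a block-aligned copy of $a^{n-2}b$: any window of length $n-1$ already contains a conjugate $a^{j}ba^{n-2-j}$ of the period, and pumping that conjugate inserts or deletes whole periods, keeping the word in $(a^{n-2}b)^{*}$.

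The gap: for all values of $k$ other than your three anchors you only promise unspecified ``templates'', and as the anchor failure shows, producing a pair with $\mps(L)=n$ and $\mps(L^{*})=k$ exactly is precisely the delicate part. The paper resolves it with three uniform families that you should compare against: for $k<n$ take the finite language $L=\{\,a^{i}\mid 0\le i\le n-1\,\}\cup\{b^{k}\}$ (so $\mps(L)=n$ and $L^{*}$ consists of the words whose maximal $b$-blocks have length divisible by $k$, giving $\mps(L^{*})=k$); for $k=n$ take $L=(a^{n})^{*}=L^{*}$; and for $n<k\le 2n-1$ take $L=(a^{n})^{*}\cup(b^{k-n+1})^{*}$, where $k-n+1\le n$ keeps $\mps(L)=n$ while the witness $a^{n-1}b^{k-n}$ in $L^{*}$ forces $\mps(L^{*})\ge k$. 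The key idea you are missing is to push the hard-to-pump sub-word of $L^{*}$ across a boundary between two letter-blocks with coprime-like divisibility constraints, rather than to rely on periodic single-word languages, whose stars are too easy to pump.
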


\begin{proof}
	First we look at the case where~$n=0$. Afterwards we argue why, for~$n\ge 1$, no value in~$\N_0\setminus \{1,2,\dots,2n-1\}$ can be reached, and at last we define languages~$L_{n,k}$ with the property that $\mps(L_{n,k})=n$ and for the Kleene star of~$L_{n,k}$ we have $\mps(L_{n,k}^*)=k$. 
	
	For~$\mps(L_{0,k})=n=0$ we observe that~$L_{0,k}=\emptyset$.
	So we have that~$\mps(L_{0,k})=n=0$ implies that~$k=\mps(L_{0,k}^*)=\mps(\emptyset^*)=\mps(\{\lambda\})=1$.
	Next we show that for any language~$L$ with~$\mps(L)=n$ we have that~$\mps(L^*)\le 2n-1$.
	We observe that each non-empty word~$\tilde{w}\in L^*$ is equal to~$\tilde{w}_1\tilde{w}_2\dots \tilde{w}_t$, for~$\tilde{w}_1,\tilde{w}_2,\dots, \tilde{w}_t\in L$. We know for each of those words that each of their
	sub-words of length~$n$ can be pumped by a sub-word of length at most~$n$.
	Assume that~$\mps(L^*)\ge 2n$ and the sub-word~$w$ of~$\tilde{w}$ is a witness for that, which means there are words~$u$ and~$v$ in~$\Sigma^*$ such that~$\tilde{w}=uwv\in L^*$
	cannot be pumped by a sub-word of the~$2n-1$-prefix of~$w$.
	W.l.o.g.\ we assume that the words~$\tilde{w}_1,\tilde{w}_2,\dots, \tilde{w}_t\in L$ are not empty. Obviously, we have that $$\tilde{w}=\tilde{w}_1\tilde{w}_2\dots \tilde{w}_t=\tilde{w}_1\tilde{w}_2\dots \tilde{w}_{i-1}w'_iww'_j\tilde{w}_{j+1}\dots \tilde{w}_t$$ 
	for~$w'_iww'_j=\tilde{w}_i\tilde{w}_{i+1}\dots \tilde{w}_{j-1}\tilde{w}_j$.
	We know that each sub-word of length~$n$ of~$\tilde{w}_i$ and~$\tilde{w}_{i+1}$ can be pumped by one of its sub-words.
	Especially this holds for the~$n$-suffix of~$\tilde{w}_i$. If this suffix is contained in~$w$ than~$uwv=\tilde{w}$ can be pumped by that sub-word of~$w$ which contradicts the assumption that~$w$ is a witness for~$\mps(L^*)\ge 2n$.
	The analogue holds true if the~$n$-prefix of~$\tilde{w}_{i+1}$ is contained in~$w$.
	Additionally, if~$\tilde{w}_i$ (or $\tilde{w}_{i+1}$, \resp) is completely contained in~$w$ and has length less than~$n$, then word~$uwv=\tilde{w}$ can be pumped by~$\tilde{w}_i$ (or $\tilde{w}_{i+1}$, \resp).
	Again this contradicts the assumption that~$w$ is a witness for~$\mps(L^*)\ge 2n$.
	Due to the fact that~$|w|\ge2n-1$ one of the previously described cases must occur.
	In conclusion we have that~$w$ cannot be a witness for~$\mps(L^*)\ge 2n$.
	Therefore,~$\mps(L^*)\le 2n-1$.
	
	Now we prove the reachability of the above-mentioned values for~$k$. 
	We distinguish the cases whether~$n>k$,~$n=k$, or~$n<k$:
	\begin{enumerate}
		\item Case~$n> k$: let~$L_{n,k}=\{\,a^i\mid 0\le i\le n-1\,\}\cup\{b^k\}$ which is a finite
		language and thus~$\mps(L_{n,k})=n$.
		Observe, that~$L_{n,k}^*$ is the language of all words that contain only $b$-blocks with lengths that are divisible by~$k$.
		Therefore the word~$w=b^k$ cannot be pumped by a sub-word of length at most~$k-1$ which implies that~$\mps(L_{n,k}^*)\ge k$.
		Assume there is a word~$w\in\{a,b\}^*$ witnessing~$\mps(L_{n,k}^*)>k$ then there are words~$u,v\in\{a,b\}^*$ such that~$uwv$ cannot be pumped by a sub-word
		of the~$k$-prefix~$y$ of~$w$. Due to the structure of~$L_{n,k}^*$ we know that~$uwv$ can be pumped by a sub-word of~$y$, if~$y$ contains an~$a$ or it contains the sub-word~$b^k$. Since~$|y|= k$ one of the conditions must be fulfilled which implies that~$\mps(L_{n,k})=k$.
		
		\item Case~$k=n$: Let~$L=(a^n)^*=L^*$. Then~$\mps(L)=\mps(L^*)=n=k$.

		\item Case~$n< k$: Let~$L_{n,k}=(a^n)^*\cup (b^{k-n+1})^*$. We have~$\mps(L_{n,k})=n$, since $k-n+1\leq (2n-1)-n+1=n$ due to the fact that~$k\in\{1,2,\dots,2n-1\}$. On the other hand we have that~$L_{n,k}^*$ contains all words which only contain~$a$- and~$b$-blocks whose length are divisible by~$n$ and~$k-n+1$, \resp.
		Therefore the word~$w=a^{n-1}b^{k-n}$ cannot be pumped by a sub-word of length~$n-1+k-n=k-1$, which implies that~$\mps(L_{n,k}^*)\ge k$.
		So we assume that there is a word~$w\in\{a,b\}^*$ witnessing~$\mps(L_{n,k}^*)>k$, which implies that there are words~$u,v\in\{a,b\}^*$ such that~$uwv$ cannot be pumped by a sub-word
		of the~$k$-prefix~$y$ of~$w$. Since~$|y|=k\ge n\ge k-n+1$ the word~$y$ must contain the
		sub-word~$a^n$ or~$b^{k-n+1}$, which implies that~$uwv$ can be pumped by that
		sub-word of~$y$. Since this contradicts the assumption that~$w$ is a
		witness for~$\mps(L_{n,k}^*)> k$ we have that~$\mps(L_{n,k}^*)\leq k$. In summary $\mps(L_{n,k}^*)=k$ as desired.
              \end{enumerate}
              This proves the stated claim.
\end{proof}
	
For the intersection operation it was left open in~\cite{DaJe22},
which numbers are reachable for the pumping constant~\mpl. We close
this gap and show that for~\mpc, \mpl, and \mps\ the same set of
numbers is reachable.

\begin{thm}
	For~$K\in\{\mpl,\mps\}$ we have
	$$
	g^{K}_{\cap}(m,n)=
	\begin{cases}
		\{0\}, & \mbox{if $m=0$ or $n= 0$,}\\
		\{1\}, & \mbox{if $m=n=1$,}\\
		\N_0, & \mbox{otherwise,}         
	\end{cases}
	$$
\end{thm}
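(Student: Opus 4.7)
The plan is to treat the three branches of the piecewise answer.

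If $m=0$ or $n=0$ then Lemma~\ref{lem:mps-zero-one} (and the analogous characterization for $\mpl$ recalled in Section~2) gives $L_i=\emptyset$ for one of the $i$, hence $L_1\cap L_2=\emptyset$ and $K=0$. For $m=n=1$, the crucial point is that the constraints $|xy|\le 1$, $y\ne\lambda$ in both Lemma~\ref{lem:pumping} (with the extra bound from~\cite{Br84,HoUl79,RaSc59}) and Lemma~\ref{lem:pumping-xy-length} force $x=\lambda$ and $y$ to be the first letter of the word (or sub-word) being pumped. Hence $\mpl(L)=1$ is equivalent to $L$ being non-empty and closed under ``pump the first letter any number of times'' (including $t=0$, which by iteration yields $\lambda\in L$), and $\mps(L)=1$ is equivalent to the same closure applied to every sub-word in every context. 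Both closures are manifestly preserved under intersection, and $\lambda\in L_1\cap L_2$, so $K(L_1\cap L_2)=1$.

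For the main case $m,n\ge 1$, $(m,n)\ne(1,1)$, and by symmetry assuming $m\ge 2$, every $k\in\N_0$ must be realised. For $k=0$, a witness $L_1$ from Theorem~\ref{thm:mpc-leq-mpl-leq-sc-binary} chosen so that $\lambda\notin L_1$ is combined with an $L_2$ disjoint from $L_1$; when $n=1$, simply take $L_2=\{\lambda\}$, which has $K(L_2)=1$ and $L_1\cap L_2=\emptyset$. For $k\ge 1$ with $k\le\min\{m,n\}$, pick $M$ with $K(M)=k$ and $\lambda\in M$ over a sub-alphabet $\Sigma_M$, and form $L_1=M\cup N_1$, $L_2=M\cup N_2$ with padding $N_1=(c^m)^*$ over $\{c\}$ and $N_2=(d^n)^*$ over $\{d\}$, $c,d$ distinct and not in $\Sigma_M$. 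The decoupling of $K$-measures over pairwise disjoint alphabets yields $K(L_i)=\max\{k,K(N_i)\}=K(N_i)$ and $L_1\cap L_2=M$ (since $N_1\cap N_2=\{\lambda\}\subseteq M$), hence $K(L_1\cap L_2)=k$. For $k>\min\{m,n\}$ one exploits the intersection blow-up on a unary alphabet: $(a^p)^*\cap(a^q)^*=(a^{\operatorname{lcm}(p,q)})^*$ has $\mpl=\mps=\operatorname{lcm}(p,q)$; combine unary cores with disjoint auxiliary padding so that $K(L_i)$ stays at $m,n$ while the unary trace of $L_1\cap L_2$ realises the desired $k$.

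The chief obstacle is this last regime. To reach a specific $k>\min\{m,n\}$ one needs unary (or essentially unary) cores $U_1,U_2$ with $K(U_i)\in\{m,n\}$ whose intersection has $K$ equal to $k$ exactly, not just bounded; verification requires a case analysis on which length-$k$ sub-words are pumpable in the intersection, in the spirit of the proof of Theorem~\ref{thm:mps-star}. The first-letter-pumpable (respectively factor-closed) constraint induced by $K=1$ is especially tight: for $n=1$ the padding $L_2$ is essentially forced to contain $a^*$, so $L_1$ must be engineered so that $L_1\cap a^*$ has exactly the target $K$-value while $K(L_1)$ stays at $m$, which is where the intricate bookkeeping with Myhill--Nerode classes and pumpable sub-words is concentrated.
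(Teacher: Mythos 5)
Your handling of the first two branches is sound and matches the paper: intersection with $\emptyset$ gives $\{0\}$, and for $m=n=1$ the ``pump the first letter'' (resp.\ ``pump any letter in any context'') closure is preserved under intersection and forces $\lambda\in L_1\cap L_2$, so $k=0$ is excluded. Your constructions for $k=0$ and for $1\le k\le\min\{m,n\}$ via disjoint-alphabet padding also work. But the theorem's real content is the regime $k>\min\{m,n\}$ --- in particular arbitrarily large $k$, e.g.\ every $k\in\N_0$ must be reachable already for $m=2$, $n=1$ --- and there your argument has a genuine gap. The device you propose, $(a^p)^*\cap(a^q)^*=(a^{\mathrm{lcm}(p,q)})^*$ with $p\le m$ and $q\le n$ (needed so that the unary cores do not push $K(L_i)$ above $m$ and $n$), only realises values of the form $\mathrm{lcm}(p,q)\le mn$; this is a bounded and sparse set, so it cannot produce, say, $k=7$ from $m=2$, $n=3$, let alone unboundedly large $k$. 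You acknowledge this (``the chief obstacle is this last regime'') but the sketch of ``combine unary cores with disjoint auxiliary padding'' is not a construction, and no unary-core approach can close the gap for the reason just given.

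The paper's construction for this regime uses a different and essentially unavoidable idea: make each $L_i$ an infinite language that is pumpable \emph{everywhere} except for one isolated short word over a private letter (e.g.\ $c^{m-1}$, resp.\ $e^{n-1}$), which pins $\mpl(L_i)=\mps(L_i)$ to $m$, resp.\ $n$, independently of $k$; then arrange the infinite parts so that their intersection is a \emph{finite} language whose longest word has length exactly $k-1$, whence $\mpl=\mps=k$ by the basic fact $\mpl(L)=\mps(L)=1+\max\{\,|w|\mid w\in L\,\}$ for finite non-empty $L$. Concretely, for even $k\ge 2$ the paper takes $L_{m,k}=\{c^{m-1}\}\cup\{ba\}^*\{b\}\{ad\}^*\cup\{da\}^*\{d\}$ and $L_{n,k}=\{e^{n-1}\}\cup B^{(*)}_{k-2}\{d\}^*$, whose intersection is $\{\,(ba)^id\mid 0\le i\le (k-2)/2\,\}$. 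If you want to complete your proof, you need some such mechanism in which the intersection is finite (or otherwise has its $K$-value decoupled from $m$ and $n$); the lcm route cannot be repaired.
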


\begin{proof}
	Obviously we have that~$L\cap \emptyset=\emptyset\cap L=\emptyset$ for each regular language~$L$.
	Assume that~$L,$ $L'$ are regular languages with~$\mpl(L)=\mpl(L')=1$.
	Given a word~$\tilde{w}\in L\cap L'$ such that~$\tilde{w}$ is a witness for~$\mpl(L\cap L')\ge 2$ then~$\tilde{w}$ cannot be pumped by its first letter.
	On the other side we know that~$\tilde{w}$ can be pumped in~$L$ and in~$L'$ by its first
	letter since~$\mpl(L)=\mpl(L')=1$. This implies that each word we obtain from~$\tilde{w}$ by
	pumping its first letter is in~$L$ and~$L'$; therefore in~$L\cap L'$.
	Hence, we can pump~$\tilde{w}$ in~$L\cap L'$ by its first letter which is a contradiction
	to the assumption on~$\tilde{w}$.
	The previously shown reasoning also applies similarly for~$\mps(L)=\mps(L')=1$ because with
	this property each word in~$L$ and~$L'$ can be pumped by any of its letters.
	The value~$k=0$ is unreachable for~$n=m=1$ because each language~$L$ with~$\mpl(L)=1$ or~$\mps(L)=1$ contains the letter~$\lambda$ due to Lemma~\ref{lem:mps-zero-one} and the remark after Lemma~\ref{lem:pumping}.
	Next we construct languages such that all values~$k\ge 0$ can be achieved in the 
	general case for~$m$ and~$n$. 
	Here we distinguish whether~$k$ is equal to zero, one or an odd or an even value which is at least two---notice that the construction for~$k=1$ also applies for~$m=n=1$:
	\begin{itemize}
		\item For~$k=0$ we define~$L_{m,k}=\{a^{m-1}\}$ and~$L_{n,k}=\{b^{n-1}\}$ which
		are finite languages and therefore fulfill~$\mpl(L_{m,k})=\mps(L_{m,k})=m$ and~$\mpl(L_{n,k})=\mps(L_{n,k})=n$.
		Clearly~$L_{m,k}\cap L_{n,k}=\emptyset$ which provides~$\mpl(\emptyset)=\mps(\emptyset)=0=k$.
		
		\item In the case~$k=1$ we define~$L_{m,k}=\{a^{m-1}\}\cup\{b\}^*$ and~$L_{n,k}=\{c^{n-1}\}\cup\{b\}^*$ which fulfill~$\mpl(L_{m,k})=\mps(L_{m,k})=m$ and~$\mpl(L_{n,k})=\mps(L_{n,k})=n$ because~$a^{m-1}\in L_{m,k}$ and~$c^{n-1}\in L_{n,k}$ are not pumpable by any of their sub-words.
		Obviously we have~$L_{m,k}\cap L_{n,k}=\{b\}^*$ which suffices~$\mpl(\{b\}^*)=\mps(\{b\}^*)=1=k$.

		\item Now we study the case where~$k\ge 2$ is an even integer.
		If~$k\ge 2$ one of the values~$m$ and~$n$ must be at least equal to two.
		Since the intersection of regular languages is symmetric in its arguments we
		assume without loss of generality that~$m\ge 2$ and~$n\ge 1$.
				
		We set~$L_{m,k}=\{c^{m-1}\}\cup\{ba\}^*\{b\}\{ad\}^*\cup\{da\}^*\{d\}$
		and~$L_{n,k}=\{e^{n-1}\}\cup B_{k-2}^{(*)}\{d\}^*$.
		We observe that~$c^{m-1}\in L_{m,k}$ and~$e^{n-1}\in L_{n,k}$ are not pumpable 
		which implies that~$m\le\mpl(L_{m,k})\le \mps(L_{m,k})$ and~$n\le\mpl(L_{n,k})\le \mps(L_{n,k})$.
		Since each word in~$L_{n,k}$ is pumpable by each of its letters except~$e^{n-1}$ we obtain~$n=\mpl(L_{n,k})=\mps(L_{n,k})$.
		Further each word in~$\tilde{w}\in\{ba\}^*\{b\}\{ad\}^*\cup \{da\}^*\{d\}$
		is pumpable by a sub-word~$y$ of each sub-word~$w$ of~$\tilde{w}$ with~$|w|\ge 2$.
		This can be seen by looking at the different cases for the prefixes of length two of~$w$ which is done next. For the sake of simplicity we assume~$|w|=2$.
		We will give for each case the word~$y$ and then verify that~$\tilde{w}$ can be pumped
		by~$y$ by distinguishing between the words~$\tilde{w}$ which can contain~$w$:
		\begin{itemize}
			\item For~$w=ab$ we can choose~$y=ab$ which is observed by understanding
			that~$\{ba\}^*\{b\}\{ad\}^*=\{b\}\{ab\}^+\{ad\}^*\cup \{b\}\{ad\}^*$.
			
			\item For~$w=ba$ we can choose~$y=ba$.
			First let~$\tilde{w}=(ba)^ib(ad)^j\in \{ba\}^*\{b\}\{ad\}^*$.
			If~$\tilde{w}=(ba)^{i'}w(ba)^{i''}b(ad)^j$ for~$i=i'+i''+1$ then we obtain by
			pumping~$\tilde{w}$ \textit{via}~$y$ a word~$(ba)^{i+\ell}b(ad)^*$ for~$-1\le \ell$.
			For~$\tilde{w}=(ba)^iwd(ad)^{j-1}$ we obtain by pumping~$\tilde{w}$ \textit{via}~$y$ a word~$(ba)^{i+\ell}b(ad)^*$ for~$0\le \ell$ and the word~$d(ad)^{j-1}=(da)^{j-1}d\in\{da\}^*\{d\}$ for~$\ell=-1$.
			
			\item For~$w=ad$ we can choose~$y=ad$ which is easy to confirm for
			each word in~$\{ba\}^*\{b\}\{ad\}^*$ and each word in~$\{da\}^*\{d\}=\{d\}\{ad\}^*$.
			
			\item For~$w=da$ we can choose~$y=da$ because~$\{ba\}^*\{b\}\{ad\}^*=\{ba\}^*\{b\}\{a\}\{da\}^*\{d\}\cup\{ba\}^*\{b\}$ and for the words in~$\{ba\}^*\{b\}\{a\}\{da\}^*\{d\}\cup\{da\}^*\{d\}$ it is obvious that they can be pumped by~$y=da$.
		\end{itemize}
		In conclusion each word~$\tilde{w}$ in~$L_{m,k}$ can be pumped by a sub-word~$y$ of
		every sub-word~$w$ of~$\tilde{w}$ if~$|w|\ge 2$.
		Therefore we obtain that~$\mpl(L_{m,k})=\mps(L_{m,k})=m$.

		We observe that~$L_{m,k}\cap L_{n,k}=(\{c^{m-1}\}\cup\{ba\}^*\{b\}\{ad\}^*\cup\{da\}^*\{d\})\cap (\{e^{n-1}\}\cup B_{k-2}^*\{d\}^*)=\{\,(ba)^id\mid 0\le i\le (k-2)/2\,\}$
		which is a finite language and therefore suffices~$\mpl(L_{m,k}\cap L_{n,k})=\mps(L_{m,k}\cap L_{n,k})=k$. 
		This is due to the fact that the longest word in this language is~$\tilde{w}=(ba)^{(k-2)/2}d$ which fulfills~$|\tilde{w}|=2\cdot(k-2)/2+1=k-1$.
	
		\item In the case that~$k\ge 2$ is an odd integer we adapt the language~$L_{m,k}$ shown in the previous case to be equal to~$\{c^{m-1}\}\cup\{ba\}^*\{bd\}^*\cup\{da\}^*\{d\}$.
		Indeed the property~$\mpl(L_{m,k})=\mps(L_{m,k})=m$ can be
		proven in the same style as in the previous case.
		Additionally we have~$L_{m,k}\cap L_{n,k}=(\{c^{m-1}\}\cup\{ba\}^*\{bd\}^*\cup\{da\}^*\{d\})\cap (\{e^{n-1}\}\cup B_{k-2}^*\{d\}^*)=\{\,(ba)^ibd\mid 0\le i\le (k-3)/2\,\}$,
		which is a finite language and therefore suffices~$\mpl(L_{m,k}\cap L_{n,k})=\mps(L_{m,k}\cap L_{n,k})=k$. 
		Here the longest word in this language is~$\tilde{w}=(ba)^{(k-3)/2}bd$ which fulfills~$|\tilde{w}|=2\cdot(k-3)/2+2=k-1$.
		
	\end{itemize}
\end{proof}

\bibliographystyle{eptcs} 
\bibliography{shortmarkus,markus}

\end{document}